\documentclass{article}

\usepackage{amsmath}
\usepackage{amsfonts}
\usepackage{amssymb}
\usepackage{amsthm,color}
\usepackage{amsmath,amsthm,amssymb}
\usepackage{graphicx,mathtools}
\usepackage{cite}
\usepackage{enumitem}
\usepackage{lipsum}
\usepackage{algorithm}      
\usepackage{algorithmic}    
\usepackage{dsfont}
\usepackage{hyperref}

\newtheorem{Theorem}{Theorem}[section]

\newtheorem{Lemma}[Theorem]{Lemma}
\newtheorem{Corollary}[Theorem]{Corollary}

\def\({\left(}
\def\){\right)}

\usepackage{subcaption} 

\graphicspath{{pics/}}

\begin{document}

\title{Factorization method for a clamped obstacle from near-field measurements via a far-field transformation}
\author{General Ozochiawaeze\footnote{Department of Mathematics, Purdue University, West Lafayette, IN 47907, USA, \texttt{oozochia@purdue.edu}} \, and Isaac Harris\footnote{Department of Mathematics, Purdue University, West Lafayette, IN 47907, USA, \texttt{harri814@purdue.edu}}}
\date{}
\maketitle
\begin{abstract}
This paper considers an inverse shape problem for recovering an unknown
impenetrable clamped obstacle in two dimensions from near-field point
source and dipole measurements for the biharmonic Helmholtz equation in
the frequency domain. The measured data consist of the scattered field
and its normal derivative on a closed measurement curve surrounding the
obstacle. Since the associated near-field operator does not directly admit the symmetric factorization required by the factorization method, we introduce a far-field transformation. This transformation is defined independently of the obstacle and augments the near-field operator into the associated far-field operator which does admit a symmetric factorization. This yields a rigorous complete characterization of the obstacle by the factorization method theory and leads to a practical reconstruction algorithm based on the spectral data of the transformed operator. Numerical experiments are presented to demonstrate
the effectiveness of the proposed method, with synthetic near-field data
generated using the method of fundamental solutions. We also consider reconstructions using only scattered-field measurements generated by point sources, demonstrating the potential for reduced the amount of measured data.
\end{abstract}

\section{Introduction}
Obstacle scattering problems constitute a fundamental class of problems in scattering theory. The direct problem concerns determining the scattered field generated by a known obstacle and a prescribed incident field, whereas the inverse problem seeks to recover the unknown obstacle from measurements of the scattered field. This paper considers the inverse shape problem of recovering a clamped obstacle in a thin elastic plate from measured multi-static near-field data. The resulting two-dimensional scattering problem is modeled by a frequency-domain biharmonic `Helmholtz' equation coupled with clamped boundary conditions. In contrast to acoustic and electromagnetic scattering, flexural (i.e. biharmonic)  scattering in a thin elastic plate describes out-of-plane displacements and are governed by a fourth-order equation that reflects the bending behavior of the plate. This higher-order structure gives rise to distinct analytical features and presents additional challenges for the mathematical analysis and numerical reconstruction of inverse scattering problems.

The study of biharmonic scattering problems is motivated by a broad range of applications. These include the design of ultra-broadband elastic cloaks for vibration control in vehicles and earthquake-resistant passive systems for smart buildings (see \cite{Farhat1}, \cite{Farhat2}, \cite{Farhat3}). Platonic crystals, consisting of periodic arrays of cavities, provide a mechanism for manipulating flexural waves analogous to photonic and phononic crystals \cite{Farhat0}. Acoustic black holes can passively trap flexural waves, with applications in noise reduction, energy harvesting, and biomedical devices \cite{Pelat2020}. Biharmonic models also arise in nondestructive testing and structural health monitoring in the aerospace industry (see e.g., \cite{Wang2022} and \cite{Zhou2022}), as well as in the modeling of sea ice and ice shelves (see \cite{Askham2025}). These applications have recently motivated considerable interest in the analysis of biharmonic wave propagation and scattering.

In this paper, we develop a factorization method for reconstructing an unknown clamped obstacle from near-field measurements generated by a point-source incident field. Numerical methods for inverse scattering can be broadly classified into quantitative and qualitative approaches. Quantitative methods, like the domain derivative \cite{Fink2024} and the continuation method \cite{Bao2007}, typically formulate the reconstruction as an optimization problem and iteratively update the unknown boundary based on the measured data. Qualitative methods, on the other hand, are direct imaging approaches that seek to identify the unknown scatterer by evaluating an appropriate imaging functional constructed from the scattering data, without explicitly solving an optimization problem for the boundary. Such methods are often computationally attractive and require relatively limited a priori information about the unknown obstacle. The factorization method, originally proposed by Kirsch in \cite{Kirsch1998} for the inverse acoustic scattering problem, belongs to this class of qualitative (i.e. non-iterative inversion) methods. Compared with the linear sampling method \cite{ColtonKirsch1996}, the factorization method provides a sharper characterization of sampling points according to whether they lie inside or outside the scatterer and, moreover, admits a necessary and sufficient criterion that provides a rigorous foundation for its mathematical analysis.

Considerable attention has been devoted to the development of numerical methods for inverse scattering problems governed by the biharmonic wave equation. For clamped obstacles, an optimization-based reconstruction method using near-field measurements of both the scattered field and its Laplacian was developed in \cite{ChangGuo2023}, extending an earlier decomposition approach of Colton and Kirsch \cite{ColtonKress2019}. In \cite{BourgeoisRecoquillay2020}, a near-field linear sampling method was developed for reconstructing obstacles with clamped and free plate boundary conditions. For far-field measurements, in \cite{Guo2024} the linear sampling method was developed for identifying clamped obstacles, while \cite{HarrisLiOzochiawaeze2026} developed a factorization of the far-field operator that removes a restrictive assumption concerning Dirichlet eigenvalues and proposed an extended sampling method for recovering clamped obstacles with one or few incident directions. In addition to these approaches, direct sampling methods have been adapted to biharmonic scattering in \cite{HarrisLeeLi2025, ZhuGe2025}.

Factorization methods have also been developed for inverse scattering problems governed by the biharmonic wave equation, including applications to penetrable absorbing media \cite{CejaAyalaHarrisOzochiawaeze2026} and impenetrable clamped and free plate obstacles with far-field data \cite{Zhu2026}. Furthermore, more closely related to the present work, a factorization method was developed for recovering simply supported obstacles with Poisson's ratio equal to one from near-field point-source measurements \cite{HarrisKleefeld2026}. In that work, by splitting the biharmonic scattered field into propagating and evanescent components, the authors reformulated the near-field data in terms of an acoustic near-field operator and subsequently transformed it into an acoustic far-field operator. However, extending this approach to clamped obstacles is considerably more challenging due to the more complicated structure of the associated near-field operator. In particular, uniqueness results obtained in \cite{BourgeoisRecoquillay2020} for recovering both clamped and free plate obstacles require richer near-field data consisting of the scattered fields generated by both point-source and dipole incident fields, together with their normal derivatives. Consequently, the propagating and evanescent decomposition alone does not yield an analogous acoustic near-field operator for the clamped problem, making the development of a factorization method considerably more delicate. In the present work, we overcome the corresponding difficulty for near-field data by establishing a novel factorization of the corresponding near-field operator and introducing explicit bounded operators (defined independently of the scatterer) that augment the biharmonic near–field operator into the associated far–field operator for the clamped obstacle, to which the results in \cite{Zhu2026} can be applied.

The paper is organized as follows. In Sect. \ref{intro}, we formulate the direct and inverse shape scattering problems, explicitly identifying the need for multi-static near-field measurements generated by point sources and dipoles to ensure unique recovery. In Sect. \ref{NF_operator_sect}, we introduce the near-field operator for the clamped obstacle and derive a novel factorization of this operator. In Sect. \ref{transform_sect}, we derive and precompute the far-field transformation using separation of variables, which augments the near-field operator into a far-field operator for a clamped obstacle and leads to a factorization method. In Sect. \ref{numerics_sect}, we describe the numerical implementation, where the near-field data are computed using the method of fundamental solutions as in \cite{karageorghislesnic2024}, and present numerical reconstructions. Finally, in Sect. \ref{conclude_sect}, we summarize our results and discuss possible directions for future work.

\section{The Direct and Inverse Scattering Problem}\label{intro}
In this section, we introduce both the direct and inverse scattering problem under consideration. Let $D \subset \mathbb{R}^2$ be a bounded obstacle with a $C^2$-smooth boundary $\partial D$, and let $\mathcal{C}$ be a measurement curve surrounding $D$. We illuminate the obstacle by prescribing an incident wave denoted
\[
{u^{i}(\cdot \, , y), \quad \text{where }y\in \mathcal C\text{ such that }\overline D\subset \text{Int}(\mathcal C)}
\]
for some smooth closed curve $\mathcal C$. Here, $\mathcal C$ is the measurement curve where the sources and receivers are assumed to be placed. {For our problem, we consider two types of incident fields: a point-source incident wave, and a dipole incident wave given by its normal derivative.} To this end, we define the point-source incident wave by 
\begin{align}
\mathbb G(x,y)\coloneqq -\frac{\mathrm{i}}{8\kappa^2}\left(H_0^{(1)}(\kappa|x-y|)-H_0^{(1)}(\mathrm i\kappa|x-y|)\right),\quad\text{for } x\neq y
\end{align}
with $H_0^{(1)}$ defined as the Hankel function of the first kind of order zero and $\kappa>0$ is the wavenumber. This is the radiating fundamental solution for the operator $\Delta^2-\kappa^4$ such that $\mathbb G(\cdot \, ,y)$ satisfies
\[
\Delta^2\mathbb G(\cdot \, ,y)-\kappa^4\mathbb G(\cdot \, ,y)=-\delta(\cdot-y)\quad \text{in }\mathbb R^2
\]
along with radiation conditions
\[
\lim_{r\to\infty}\sqrt r(\partial_r \mathbb G(x,y)- \text{i}\kappa\mathbb G(x,y))=0\quad \text{and}\quad \lim_{r\to\infty}\sqrt r(\partial_r \Delta\mathbb  G(x,y)- \text{i}\kappa\Delta \mathbb G(x,y))=0
\]
with $r=|x|$. The above limits are assumed to hold uniformly for $\hat x=x/r$ and $y\in \mathcal C$. Note that we have that $\mathbb G(x,y)$ is  $C^{\infty}$-smooth for all $x\neq y$ in $\mathbb R^2$.

For the clamped obstacle, the total displacement field and its normal derivative have zero trace on the boundary $\partial D$.
For a given incident field located at the source $y \in \mathcal{C}$, the corresponding scattered field $u^{s}(\cdot , y)\in H_{\text{loc}}^2(\mathbb R^2\setminus\overline D)$ satisfies the biharmonic scattering problem for a fixed wave number $\kappa>0$ given by
\begin{align}\label{eqnbcs}
    \Delta^2u^s-\kappa^4u^s=0,\quad x\in\mathbb R^2\setminus\overline D
\end{align}
with the clamped boundary conditions
\begin{align}\label{bcs}
    u^s=-u^i\quad \text{and}\quad \partial_\nu u^s=-\partial_\nu u^i,\quad x\in\partial D.
\end{align}
In addition, to close the system we assume that the scattered field satisfies the Sommerfeld radiation conditions
\begin{align}\label{SRCs}
    \lim_{r\to\infty} \sqrt{r}(\partial_r u^s-\mathrm{i}\kappa u^s)=0\quad \text{and}\quad \lim_{r\to\infty} \sqrt{r}(\partial_r \Delta u^s-\mathrm{i}\kappa \Delta u^s)=0, \quad r=|x|
\end{align}
which is assumed to hold uniformly in $\hat x=x/r$. We note that the well-posedness of the forward problem \eqref{eqnbcs}-\eqref{SRCs} was shown to hold in \cite{BourgeoisHazard2020}. 

{For the inverse problem, we consider sources and receivers distributed on the measurement curve $\mathcal C$. For each source point $y \in \mathcal C$, we can probe the obstacle with the incident field
\[
u^{i}(\cdot \, , y) = \mathbb{G}(\cdot \, , y), \,\,\text{where we denote the corresponding scattered field by } u^{s}(\cdot \, , y),
\]
and for the auxiliary dipole incident field
\[
\tilde{u}^{i}(\cdot \, , y) = \partial_{\nu_y}\mathbb{G}(\cdot \, , y), \,\, \text{where we denote the corresponding scattered field by } \tilde{u}^{s}(\cdot \, , y).
\]
Here $\nu_y$ denotes the outward unit normal at $y \in \mathcal C$.} 
For every source point $y \in \mathcal C$, we assume that we can measure both scattered fields together with their normal derivatives
%
These measurements form the \emph{multistatic} near-field data set
\begin{equation}\label{NF_dataset}
\mathbb M :=
\Big\{
u^{s}(x,y), \ \tilde u^{s}(x,y), \ \partial_{\nu_x}u^{s}(x,y), \ \partial_{\nu_x}\tilde u^{s}(x,y)
:\, \text{for all} \, \, x,y \in \mathcal C
\Big\}.
\end{equation}
The inverse problem is to reconstruct the unknown obstacle $D$ from the knowledge of $\mathbb M$. 
The uniqueness of this inverse problem has been established in Theorem 2.3 of \cite{BourgeoisRecoquillay2020}, which shows that the measured multistatic near-field data uniquely determines the impenetrable clamped obstacle.

\section{Factorization of near-field operator}\label{NF_operator_sect}
In this section, we derive a factorization of the near-field operator. For our analysis of the inverse problem, we proceed to define the (multistatic) near-field operator $\mathcal N : [L^2(\mathcal C)]^2\to [L^2(\mathcal C)]^2$ by
\begin{equation}\label{NF_operator}
\mathcal N
\begin{pmatrix}
h\\
t
\end{pmatrix}(x)
=
\begin{pmatrix}
\displaystyle
\int_{\mathcal C}
\big(u^{s}(x,y)\,h(y) + \tilde u^{s}(x,y)\,t(y)\big)\,ds(y)
\\[2ex]
\displaystyle
\partial_{\nu_x}
\int_{\mathcal C}
\big(u^{s}(x,y)\,h(y) + \tilde u^{s}(x,y)\,t(y)\big)\,ds(y)
\end{pmatrix},
\qquad x \in \mathcal C.
\end{equation}
Note that the near-field operator $\mathcal{N}$ associated with the data set $\mathbb M$ was first studied in \cite{BourgeoisRecoquillay2020}. We build upon their foundational analysis and develop a new factorization of the near-field operator. 

To begin, we now introduce the following auxiliary operators. We first consider the obstacle-to-data operator
$\mathcal B : H^{-3/2}(\partial D)\times H^{-1/2}(\partial D)
\to [L^2(\mathcal C)]^2$ defined by
\begin{equation}
\mathcal B
\begin{pmatrix}
\tau\\
\sigma
\end{pmatrix}(x)
=
\begin{pmatrix}
\displaystyle
\int_{\partial D}
\Big(
\mathbb G(x,y)\tau(y)
+
\partial_{\nu_y}\mathbb G(x,y)\sigma(y)
\Big)\,ds(y)
\\[2ex]
\displaystyle
\partial_{\nu_x}
\int_{\partial D}
\Big(
\mathbb G(x,y)\tau(y)
+
\partial_{\nu_y}\mathbb G(x,y)\sigma(y)
\Big)\,ds(y)
\end{pmatrix},
\qquad x \in \mathcal C.
\end{equation}
Moreover, we define the solution operator
$\mathcal U : H^{3/2}(\partial D)\times H^{1/2}(\partial D)
\to [L^2(\mathcal C)]^2$
defined by
\begin{equation}\label{soln_operator}
\mathcal U(\phi, \psi)^\top = \left(v, \, \partial_{\nu} v\right)^\top\Big|_{\mathcal C},
\end{equation}
where $v \in H^2_{\mathrm{loc}}(\mathbb R^2\setminus\overline D)$ is the radiating solution of the exterior problem
\begin{equation}\label{ext_BVP}
\Delta^2 v - \kappa^4 v = 0 
\qquad \text{in } \mathbb{R}^2 \setminus \overline{D},
\end{equation}
satisfying the clamped boundary conditions
\begin{align}\label{clamped_bcs_inhomog}
v = \phi \quad\text{and}\quad \partial_{\nu} v = \psi 
\qquad \text{on } \partial D.
\end{align}
for any $(\phi,\psi)^\top\in H^{3/2}(\partial D)\times H^{1/2}(\partial D)$. 

To obtain a factorization of the near-field operator, we first determine the transpose of the obstacle-to-data operator $\mathcal B$. 
We denote by $\langle \cdot, \cdot \rangle_{[L^2(\mathcal C)]^2}$ the natural bilinear pairing (or integral pairing) between $[L^2(\mathcal C)]^2$ and itself, and by $\langle \cdot, \cdot \rangle$ the duality pairing between the Sobolev spaces $H^{-3/2}(\partial D)\times H^{-1/2}(\partial D)$ and $H^{3/2}(\partial D)\times H^{1/2}(\partial D)$. Using this bilinear form, we compute
\begin{align*}
\left\langle
\mathcal B
\begin{pmatrix}
\tau\\
\sigma
\end{pmatrix},
\begin{pmatrix}
h\\
t
\end{pmatrix}
\right\rangle_{[L^2(\mathcal C)]^2}
&=
\int_{\mathcal C}
\int_{\partial D}
\Big(
\mathbb G(x,y)\tau(y)
+
\partial_{\nu_y}\mathbb G(x,y)\sigma(y)
\Big)\,ds(y)\,h(x)\,ds(x)
\\
&\quad+
\int_{\mathcal C}
\partial_{\nu_x}
\left(
\int_{\partial D}
\Big(
\mathbb G(x,y)\tau(y)
+
\partial_{\nu_y}\mathbb G(x,y)\sigma(y)
\Big)\,ds(y)
\right)
t(x)\,ds(x)
\\[1ex]
&=
\int_{\partial D}\tau(y)
\int_{\mathcal C}
\Big(
\mathbb G(x,y)h(x)
+
\partial_{\nu_x}\mathbb G(x,y)t(x)
\Big)\,ds(x)\,ds(y)
\\
&\quad+
\int_{\partial D}\sigma(y)
\int_{\mathcal C}
\Big(
\partial_{\nu_y}\mathbb G(x,y)h(x)
+
\partial_{\nu_y}\partial_{\nu_x}\mathbb G(x,y)t(x)
\Big)\,ds(x)\,ds(y)
\\[1ex]
&=
\int_{\partial D}
\tau(y)
\int_{\mathcal C}
\Big(
\mathbb G(y,x)h(x)
+
\partial_{\nu_x}\mathbb G(y,x)t(x)
\Big)\,ds(x)\,ds(y)
\\
&\quad+
\int_{\partial D}
\sigma(y)
\int_{\mathcal C}
\Big(
\partial_{\nu_y}\mathbb G(y,x)h(x)
+
\partial_{\nu_y}\partial_{\nu_x}\mathbb G(y,x)t(x)
\Big)\,ds(x)\,ds(y)
\\[1ex]
&=
\left\langle
\begin{pmatrix}
\tau\\
\sigma
\end{pmatrix},
\mathcal B^\top
\begin{pmatrix}
h\\
t
\end{pmatrix}
\right\rangle.
\end{align*}
Because of the symmetry of the fundamental solution, i.e.,  $\mathbb G(x,y)=\mathbb G(y,x)$ and a change of variables above, the transpose operator $\mathcal B^\top : [L^2(\mathcal{C})]^2 \to H^{3/2}(\partial D)\times H^{1/2}(\partial D)$ is given by
\[
\mathcal B^\top
\begin{pmatrix}
h\\
t
\end{pmatrix}(x)
=
\begin{pmatrix}
\displaystyle
\int_{\mathcal C}
\Big(
\mathbb G(x,y)h(y)
+
\partial_{\nu_y}\mathbb G(x,y)t(y)
\Big)\,ds(y)
\\[3ex]
\displaystyle
\int_{\mathcal C}
\Big(
\partial_{\nu_x}\mathbb G(x,y)h(y)
+
\partial_{\nu_x}\partial_{\nu_y}\mathbb G(x,y)t(y)
\Big)\,ds(y)
\end{pmatrix},
\qquad x\in\partial D.
\]
With these definitions, the near-field operator admits the following
factorization.
\begin{Lemma}
The near-field operator $\mathcal N$ admits the factorization
\[
\mathcal N=-\mathcal U\mathcal B^\top .
\]
\end{Lemma}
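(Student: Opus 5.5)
Fix $(h,t)^\top\in[L^2(\mathcal C)]^2$ and use linearity together with well-posedness of the exterior clamped problem. Define the superposed incident field
\[
v^i(x) := \int_{\mathcal C}\Big(\mathbb G(x,y)h(y) + \partial_{\nu_y}\mathbb G(x,y)t(y)\Big)\,ds(y), \qquad x\in \text{Int}(\mathcal C),
\]
and the superposed scattered field
\[
v^s(x) := \int_{\mathcal C}\big(u^s(x,y)h(y)+\tilde u^s(x,y)t(y)\big)\,ds(y), \qquad x\in\mathbb R^2\setminus\overline D .
\]
By definition \eqref{NF_operator}, $\mathcal N(h,t)^\top = (v^s,\partial_\nu v^s)^\top|_{\mathcal C}$. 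The formula for $\mathcal B^\top$ derived above shows that
\[
\mathcal B^\top(h,t)^\top = \big(v^i|_{\partial D},\ \partial_\nu v^i|_{\partial D}\big)^\top .
\]
The first component is immediate. For the second, $\partial_{\nu_x}$ of the integral equals the integral of $\partial_{\nu_x}\mathbb G$, because $\partial D$ and $\mathcal C$ are disjoint and $\mathbb G$ is smooth off the diagonal. Since $v^i$ is smooth in a neighborhood of $\overline D$, this pair lies in $H^{3/2}(\partial D)\times H^{1/2}(\partial D)$, so $\mathcal U$ can be applied to it.

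Next, show that $v^s$ is the radiating solution of \eqref{ext_BVP} with boundary data $-\mathcal B^\top(h,t)^\top$.
\begin{itemize}
\item \emph{Equation.} Each $u^s(\cdot,y)$ and $\tilde u^s(\cdot,y)$ solves \eqref{eqnbcs}, so $\Delta^2 v^s-\kappa^4 v^s=0$ in $\mathbb R^2\setminus\overline D$.
\item \emph{Boundary conditions.} From \eqref{bcs}, on $\partial D$ we have $u^s(\cdot,y)=-\mathbb G(\cdot,y)$ and $\partial_\nu u^s(\cdot,y)=-\partial_\nu\mathbb G(\cdot,y)$. Likewise $\tilde u^s(\cdot,y)=-\partial_{\nu_y}\mathbb G(\cdot,y)$, with the analogous identity for the normal derivative. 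Integrating against $h$ and $t$ gives $v^s=-v^i$ and $\partial_\nu v^s=-\partial_\nu v^i$ on $\partial D$.
\item \emph{Radiation.} The conditions \eqref{SRCs} for $v^s$ follow by integrating over the compact curve $\mathcal C$.
\end{itemize}
By uniqueness for the forward problem \cite{BourgeoisHazard2020}, $v^s$ equals the solution $v$ in \eqref{soln_operator} with $(\phi,\psi)=-\mathcal B^\top(h,t)^\top$. Hence
\[
\mathcal N(h,t)^\top = -\,\mathcal U\mathcal B^\top(h,t)^\top .
\]

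The main obstacle is the justification of the superposition step. One must show that $y\mapsto u^s(\cdot,y)$ and $y\mapsto\tilde u^s(\cdot,y)$ are continuous, and hence integrable, into $H^2_{\mathrm{loc}}(\mathbb R^2\setminus\overline D)$. This is needed so that the differential operator, the traces on $\partial D$ and $\mathcal C$, and the limit in the radiation condition all commute with the integral over $\mathcal C$. I would handle it through the solution operator itself. The data $y\mapsto(\mathbb G(\cdot,y),\partial_\nu\mathbb G(\cdot,y))|_{\partial D}$ depend continuously on $y\in\mathcal C$ in $H^{3/2}(\partial D)\times H^{1/2}(\partial D)$, indeed smoothly, by the same separation of $\partial D$ from $\mathcal C$. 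Well-posedness makes the data-to-solution map bounded, so $u^s(\cdot,y)$ is the image of this data under a bounded linear map, and likewise for $\tilde u^s$. A bounded linear map commutes with Bochner integrals, so $v^s$ is exactly $\mathcal U$ applied to the integrated data. This gives the conclusion directly, without checking the radiation condition separately; uniformity of the radiation limits in $y$, which is assumed, gives the same conclusion independently.
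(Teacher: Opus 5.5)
Your proposal is correct and follows the paper's route. You identify $\mathcal B^\top(h,t)^\top$ as the Cauchy data on $\partial D$ of the superposed incident field $v^i$, show that the superposed scattered field is the radiating solution with data $-\mathcal B^\top(h,t)^\top$, and conclude by uniqueness of the exterior clamped problem. Your Bochner-integral argument (continuity of $y\mapsto(\mathbb G(\cdot,y),\partial_\nu\mathbb G(\cdot,y))|_{\partial D}$ together with boundedness of the data-to-solution map) just makes rigorous what the paper asserts as ``well-posedness and superposition,'' and it also avoids the paper's loose $v$ versus $v^s$ sign bookkeeping.
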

\begin{proof}
Let $(h,t)^\top\in [L^2(\mathcal C)]^2$ and define the incident field
\[
v^i(x)=
\int_{\mathcal C}
\big(
\mathbb G(x,y)h(y)
+\partial_{\nu_y}\mathbb G(x,y)t(y)
\big)\,ds(y).
\]
By the definition of $\mathcal B^\top$, the Cauchy data of $v^i$ on
$\partial D$ satisfy
\[
\begin{pmatrix}
v^i\\
\partial_\nu v^i
\end{pmatrix}
=
\mathcal B^\top
\begin{pmatrix}
h\\
t
\end{pmatrix}.
\]
Let $v$ denote the unique radiating solution to \eqref{ext_BVP}-\eqref{clamped_bcs_inhomog} with clamped boundary conditions given by 
\[
v=-v^i,\qquad
\partial_\nu v=-\partial_\nu v^i
\quad\text{on }\partial D .
\]
Therefore, by the definition of $\mathcal U$,
\[
\mathcal U\mathcal B^\top
\begin{pmatrix}
h\\
t
\end{pmatrix}
=
-
\begin{pmatrix}
v^s|_{\mathcal C}\\
\partial_\nu v^s|_{\mathcal C}
\end{pmatrix}.
\]
On the other hand, {by well-posedness of the forward problem \eqref{eqnbcs}-\eqref{SRCs} and superposition}, the scattered field $v$ admits the representation
\[
v(x)
=
\int_{\mathcal C}
\big(
u^s(x,y)h(y)+\tilde u^s(x,y)t(y)
\big)\,ds(y),
\]
where $u^s(\cdot,y)$ and $\tilde u^s(\cdot,y)$ are the scattered fields
generated by the point-source incident fields
$\mathbb G(\cdot,y)$ and $\partial_{\nu_y}\mathbb G(\cdot,y)$,
respectively. Hence, using the definition of $\mathcal N$ in
\eqref{NF_operator}, we obtain
\[
\mathcal U\mathcal B^\top
\begin{pmatrix}
h\\
t
\end{pmatrix}
=
-\mathcal N
\begin{pmatrix}
h\\
t
\end{pmatrix}
\quad \text{ which gives that } \quad \mathcal N=-\mathcal U\mathcal B^\top 
\]
proving the claim. 
\end{proof}
We now derive a symmetric factorization of the near-field operator that is similar to Remark 2 in \cite{BourgeoisRecoquillay2020}. To proceed, we introduce the boundary integral operator
\[
\mathcal S:
H^{-3/2}(\partial D)\times H^{-1/2}(\partial D)
\to
H^{3/2}(\partial D)\times H^{1/2}(\partial D),
\]
defined by
\begin{equation}\label{coercive_op}
\mathcal S
\begin{pmatrix}
\tau\\
\sigma
\end{pmatrix}
(x)
=
\begin{pmatrix}
\displaystyle \int_{\partial D}
\Big(
\mathbb G(x,y)\tau(y)
+
\partial_{\nu_y}\mathbb G(x,y)\sigma(y)
\Big)\,ds(y)
\\[1.5ex]
\displaystyle \partial_{\nu_x}\int_{\partial D}
\Big(
\mathbb G(x,y)\tau(y)
+
\partial_{\nu_y}\mathbb G(x,y)\sigma(y)
\Big)\,ds(y)
\end{pmatrix},
\qquad x\in \partial D,
\end{equation}
which is a bounded operator due to the potential theory of the biharmonic wave equation. With this, obtaining the relation
\begin{equation}\label{relation_A}
\mathcal B=\mathcal U\mathcal S
\end{equation}
is a simple consequence of the definition of the three operators. Additionally, we remark that [Proposition 3,\cite{BourgeoisRecoquillay2020}]
establishes the compactness, injectivity, and dense range properties of
$\mathcal N$, $\mathcal B$, and $\mathcal U$. Moreover, [Proposition 7,
\cite{BourgeoisRecoquillay2020}] shows that
$\mathcal S$ is an isomorphism provided that
$\kappa$ is not the fourth root of a Dirichlet eigenvalue of the
bilaplacian $\Delta^2$ in $D$.
Combining this relation with the factorization of the near-field operator
established prior, we obtain the following factorization of the near-field operator.
\begin{Corollary}\label{NF_factorization_thm}
If $\kappa$ is not the fourth root of a Dirichlet eigenvalue of the bilaplacian $\Delta^2$ in $D$, we obtain the classical factorization
\[
\mathcal N
=
-\,\mathcal B\,\mathcal S^{-1}\,\mathcal B^\top.
\]
\end{Corollary}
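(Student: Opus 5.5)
The plan is to combine the factorization $\mathcal N=-\mathcal U\mathcal B^\top$ from the preceding Lemma with the relation $\mathcal B=\mathcal U\mathcal S$ in \eqref{relation_A}. The key observation is that $\mathcal S$ is invertible under the stated assumption on $\kappa$, so we may solve \eqref{relation_A} for $\mathcal U$ and substitute.

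The argument has three steps.

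\emph{Step 1: invertibility of $\mathcal S$.} Under the hypothesis that $\kappa$ is not the fourth root of a Dirichlet eigenvalue of $\Delta^2$ in $D$, Proposition 7 of \cite{BourgeoisRecoquillay2020} gives that
\[
\mathcal S: H^{-3/2}(\partial D)\times H^{-1/2}(\partial D)\to H^{3/2}(\partial D)\times H^{1/2}(\partial D)
\]
is an isomorphism. Hence $\mathcal S^{-1}$ exists and is bounded.

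\emph{Step 2: expressing $\mathcal U$ through $\mathcal B$.} I will verify \eqref{relation_A} on the whole domain of $\mathcal S$. Take $(\tau,\sigma)^\top$ and let $w$ be the combined single- and double-layer type potential
\[
w(x)=\int_{\partial D}\Big(\mathbb G(x,y)\tau(y)+\partial_{\nu_y}\mathbb G(x,y)\sigma(y)\Big)\,ds(y).
\]
This $w$ solves $\Delta^2 w-\kappa^4 w=0$ in $\mathbb R^2\setminus\overline D$. It satisfies the radiation conditions \eqref{SRCs}, since $\mathbb G$ does uniformly in $y$. Its exterior Cauchy data on $\partial D$ are exactly $\mathcal S(\tau,\sigma)^\top$.

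This is the one delicate point. One must check that the traces of $w$ and $\partial_\nu w$ taken from the exterior coincide with the integral expressions defining $\mathcal S$. The required jump relations for these biharmonic layer potentials are taken from the potential theory underlying \cite{BourgeoisRecoquillay2020}. Indeed, $\mathbb G$ is the difference of Helmholtz and modified Helmholtz kernels, and the $1/(2\pi)\log$ singularities cancel. As a result the kernel $\mathbb G$ is $C^1$ across the diagonal, which makes these traces continuous.

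By uniqueness of the exterior clamped problem \eqref{ext_BVP}--\eqref{clamped_bcs_inhomog}, $w$ is the radiating solution with boundary data $\mathcal S(\tau,\sigma)^\top$. Thus $\mathcal U\mathcal S(\tau,\sigma)^\top=(w,\partial_\nu w)^\top|_{\mathcal C}=\mathcal B(\tau,\sigma)^\top$. Composing on the right with $\mathcal S^{-1}$, which is possible because $\mathcal S$ is surjective onto the domain of $\mathcal U$, yields
\[
\mathcal U=\mathcal B\mathcal S^{-1}.
\]

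\emph{Step 3: substitution.} Inserting $\mathcal U=\mathcal B\mathcal S^{-1}$ into the Lemma gives
\[
\mathcal N=-\mathcal U\mathcal B^\top=-\mathcal B\mathcal S^{-1}\mathcal B^\top,
\]
which is the claimed factorization. All compositions are well defined between the stated spaces, since $\mathcal B^\top$ maps into $H^{3/2}(\partial D)\times H^{1/2}(\partial D)$, the domain of $\mathcal S^{-1}$.
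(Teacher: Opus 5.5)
Your proposal is correct and follows the paper's proof. It obtains invertibility of $\mathcal S$ from Proposition 7 of \cite{BourgeoisRecoquillay2020}, rewrites \eqref{relation_A} as $\mathcal U=\mathcal B\mathcal S^{-1}$, and substitutes this into $\mathcal N=-\mathcal U\mathcal B^\top$. The one addition is your Step 2 justification of \eqref{relation_A}, which the paper treats as immediate from the definitions; note that continuity of the $\partial_\nu$-trace actually rests on $\nabla_x\nabla_y\mathbb G$ being only logarithmically singular, not merely on $\mathbb G$ being $C^1$.
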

\begin{proof}
The assumption on $\kappa$ ensures that
$\mathcal S^{-1}$ exists and is bounded. By the
relation \eqref{relation_A}, we obtain
\[
\mathcal U=\mathcal B\mathcal S^{-1}.
\]
Substituting this identity into the factorization
\[
\mathcal N=-\mathcal U\mathcal B^\top \quad \text{implies that } \quad \mathcal N=-\mathcal B\mathcal S^{-1}\mathcal B^\top,
\]
which proves the result.
\end{proof}

\section{Factorization Method via the Far-Field Transform} \label{transform_sect}
Although the factorization in Corollary \ref{NF_factorization_thm} provides a useful representation of the near-field operator, it lacks the symmetry of the standard factorization due to the appearance of transpose operator rather than the adjoint. Therefore, it does not directly fit into the standard
framework of the factorization method. To overcome this issue, we
introduce a far-field transformation of the multistatic near-field
operator $\mathcal N$, which leads to a symmetric factorization suitable
for the factorization method.

Given Cauchy data $(f,h)^\top\in H^{3/2}(\mathcal C)\times H^{1/2}(\mathcal C)$,
let $w\in H^2_{\mathrm{loc}}(\mathbb R^2\setminus \overline{\rm{Int}(\mathcal C)})$ be the unique radiating solution of the exterior biharmonic problem
\begin{equation}
\Delta^2 w - \kappa^4 w = 0 \quad \text{in } \mathbb R^2\setminus \overline{\rm{Int}(\mathcal C)}
\end{equation}
with clamped boundary data
\begin{equation}
w= f,\qquad \partial_{\nu} w = h \quad \text{on } \mathcal C.
\end{equation}
We define the far-field transform
\begin{align}\label{FF_transform}
\mathcal Q: H^{3/2}(\mathcal C)\times H^{1/2}(\mathcal C) \to L^2(\mathbb{S}^1) \quad \text{given by} \quad (\mathcal Q (f,h)^{\top})(\hat x)
= w^{\infty}(\hat x),\quad \hat x\in\mathbb S^1,
\end{align}
where $\mathbb S^1$ denotes the unit `sphere' in $\mathbb R^2$. It is well-known that the far-field pattern $w^\infty$ exists and that $w$ satisfies the asymptotic expansion 
\[
w(x)=\frac{\mathrm e^{\mathrm i\pi/4}}{\sqrt{8\pi \kappa}}\frac{\mathrm e^{\mathrm i\kappa |x|}}{|x|^{1/2}}\left\{w^{\infty}(\hat x,d)+O(|x|^{-1})\right\},\quad \text{as}\quad |x|\to\infty,
\]
where $\hat x=x/|x|\in \mathbb S^1$. 

Recall that the obstacle-to-data operator is defined by
\[
\mathcal B(\tau,\sigma)^\top=(v^s|_{\mathcal C},\partial_\nu v^s|_{\mathcal C})^\top \quad \text{where}\quad v^s(x)
=
\int_{\partial D}
\Big(
\mathbb G(x,y)\tau(y)
+
\partial_{\nu_y}\mathbb G(x,y)\sigma(y)
\Big)\,ds(y).
\]
To connect the near-field representation with the corresponding far-field
pattern, we apply the far-field transform $\mathcal Q:
H^{3/2}(\mathcal C)\times H^{1/2}(\mathcal C)
\to
L^2(\mathbb S^1)$ introduced in \eqref{FF_transform}. Since $v^s$ is a radiating solution of the
biharmonic wave equation in $\mathbb R^2\setminus\overline D$, the
transformed data satisfy
\[
(\mathcal Q\mathcal B(\tau, \sigma)^\top)(\hat x)
=
v^\infty(\hat x),
\]
where $v^\infty$ denotes the far-field pattern of $v^s$.
It is well-known that the far-field pattern of the biharmonic fundamental solution is given by
\[
\mathbb G^\infty(\hat x,y)
=
-\frac{1}{2\kappa^2}
\mathrm e^{-\mathrm i\kappa\hat x\cdot y}.
\]
So, we obtain that 
\begin{align*}
v^\infty(\hat x)
=
\int_{\partial D}
\left(
\mathbb G^\infty(\hat x,y)\tau(y)
+
\partial_{\nu_y}\mathbb G^\infty(\hat x,y)\sigma(y)
\right)\,ds(y)
=
-\frac{1}{2\kappa^2}
\int_{\partial D}
\left(
\mathrm e^{-\mathrm i\kappa\hat x\cdot y}\tau(y)
+
\partial_{\nu_y}
\mathrm e^{-\mathrm i\kappa\hat x\cdot y}\sigma(y)
\right)\,ds(y).
\end{align*}
To proceed, we introduce the Herglotz operator defined in \cite{HarrisLiOzochiawaeze2026}. Namely, 
\begin{align}\label{herglotz_op}
\mathcal H:
L^2(\mathbb S^1)
\to
H^{3/2}(\partial D)\times H^{1/2}(\partial D)
\quad \text{is defined by}\quad
\mathcal H g
=(v_g |_{\partial D},\partial_\nu v_g|_{\partial D})^\top
\end{align}
where 
\[
v_g(x)\coloneqq \int_{\mathbb S^1}\mathrm e^{\mathrm i\kappa x\cdot d}g(d)\,ds(d),\quad \text{for }g\in L^2(\mathbb S^1)\quad \text{and}\quad  x\in \mathbb{R}^2
\]
is the Herglotz wave function.
Direct calculations gives that the adjoint $\mathcal H^{*}: H^{-3/2}(\partial D)\times H^{-1/2}(\partial D)\to L^2(\mathbb S^1)$ is given by
\[
\mathcal H^*
\begin{pmatrix}
\tau\\
\sigma
\end{pmatrix}
(\hat x)
=
\int_{\partial D}
\left(
\mathrm e^{-\mathrm i\kappa \hat x\cdot y}\tau(y)
+
\partial_{\nu_y}
\mathrm e^{-\mathrm i\kappa \hat x\cdot y}\sigma(y)
\right)\,ds(y).
\]
This implies
\[
\mathcal Q\mathcal B
=
-\frac{1}{2\kappa^2}
\mathcal H^*.
\]
By taking the transpose of this expression, we obtain the equality $\mathcal B^\top \mathcal Q^\top=-\displaystyle\frac{1}{2\kappa^2}(\mathcal H^{*})^\top$. 

We now compute the transpose of the adjoint of the Herglotz operator. By definition of the transpose operator,
\[
\left\langle
(\mathcal H^*)^\top g,
\begin{pmatrix}
\tau\\
\sigma
\end{pmatrix}
\right\rangle
=
\left\langle
g,
\mathcal H^*
\begin{pmatrix}
\tau\\
\sigma
\end{pmatrix}
\right\rangle_{L^2(\mathbb S^1)} .
\]
Substituting the expression for $\mathcal H^*$ yields
\begin{align*}
\left\langle
(\mathcal H^*)^\top g,
\begin{pmatrix}
\tau\\
\sigma
\end{pmatrix}
\right\rangle
&=\int_{\mathbb S^1}
g(\hat x)
\int_{\partial D}
\left(
\mathrm e^{-\mathrm i\kappa \hat x\cdot y}\tau(y)
+
\partial_{\nu_y}
\mathrm e^{-\mathrm i\kappa \hat x\cdot y}\,
\sigma(y)
\right)
ds(y)\,ds(\hat x)
\\
&=\int_{\partial D}
\Bigg[
\tau(y)
\int_{\mathbb S^1}
\mathrm e^{-\mathrm i\kappa \hat x\cdot y}
g(\hat x)\,ds(\hat x)
+
\sigma(y)
\int_{\mathbb S^1}
\partial_{\nu_y}
\mathrm e^{-\mathrm i\kappa \hat x\cdot y}
\,g(\hat x)\,ds(\hat x)
\Bigg]
ds(y).
\end{align*}
Hence,
\[
(\mathcal H^*)^\top g
=\begin{pmatrix}
\displaystyle
\int_{\mathbb S^1}
\mathrm e^{-\mathrm i\kappa \hat x\cdot y}
g(\hat x)\,ds(\hat x)
\\[2ex]
\displaystyle
\int_{\mathbb S^1}
\partial_{\nu_y}
\mathrm e^{-\mathrm i\kappa \hat x\cdot y}
\,g(\hat x)\,ds(\hat x)
\end{pmatrix}.
\]
We now define the reflection operator
\begin{align}\label{reflection_op}
\mathcal R : L^2(\mathbb S^1)\to L^2(\mathbb S^1)
\quad\text{which is given by}\quad 
(\mathcal R g)(\hat x)=g(-\hat x).
\end{align}
We see by this definition that $\mathcal R=\mathcal R^{-1}$.
Since
\[
\mathrm e^{-\mathrm i\kappa \hat x\cdot y}
=
\mathrm e^{\mathrm i\kappa (-\hat x)\cdot y},
\]
a change of variables $d=-\hat x$ shows that
\[
(\mathcal H^*)^\top
=
\mathcal H\mathcal R.
\]
We thus observe that the augmented near-field operator $4\kappa^4\mathcal Q\mathcal N\mathcal Q^\top\mathcal R$ obtains a symmetric factorization, which we summarize below.
\begin{Lemma}\label{Factorization_analysis}
Assume that $\kappa$ is not the fourth root of a Dirichlet eigenvalue of the bilaplacian $\Delta^2$ in $D$. Then the near-field operator defined by (\ref{NF_operator}) associated with the biharmonic scattering problem (\ref{eqnbcs})--(\ref{SRCs}) satisfies
\begin{align}\label{NF_fixed_factorization}
4\kappa^4\,\mathcal Q\mathcal N\mathcal Q^\top\mathcal R
&=
-\mathcal H^{*}\mathcal S^{-1}\mathcal H\quad \text{such that}\quad 4\kappa^4\mathcal Q\mathcal N\mathcal Q^\top\mathcal R: L^2(\mathbb S^1)\to L^2(\mathbb S^1),
\end{align}
where the bounded operators $\mathcal Q$, $\mathcal H$, $\mathcal S$, and $\mathcal R$ are given by (\ref{FF_transform}), (\ref{herglotz_op}), (\ref{coercive_op}), and (\ref{reflection_op}), respectively. 
\end{Lemma}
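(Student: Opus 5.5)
The proof composes the operators: start from Corollary \ref{NF_factorization_thm}, sandwich it between $\mathcal Q$ and $\mathcal Q^\top\mathcal R$, and use the identities derived just before the lemma. By Corollary \ref{NF_factorization_thm} (valid since $\kappa^4$ is not a Dirichlet eigenvalue of $\Delta^2$ in $D$),
\[
\mathcal Q\mathcal N\mathcal Q^\top\mathcal R = -(\mathcal Q\mathcal B)\,\mathcal S^{-1}\,(\mathcal B^\top\mathcal Q^\top)\,\mathcal R .
\]
We then substitute $\mathcal Q\mathcal B=-\frac{1}{2\kappa^2}\mathcal H^*$ and its transpose $\mathcal B^\top\mathcal Q^\top=-\frac{1}{2\kappa^2}(\mathcal H^*)^\top$. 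Next we use $(\mathcal H^*)^\top=\mathcal H\mathcal R$ and $\mathcal R^2=I$, so that $(\mathcal H^*)^\top\mathcal R=\mathcal H$. This gives
\[
\mathcal Q\mathcal N\mathcal Q^\top\mathcal R=-\frac{1}{4\kappa^4}\mathcal H^*\mathcal S^{-1}\mathcal H,
\]
and multiplying by $4\kappa^4$ yields \eqref{NF_fixed_factorization}.

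Several points of justification are needed along the way.

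\textbf{Identification of $\mathcal Q\mathcal B$.} The identity $\mathcal Q\mathcal B=-\frac{1}{2\kappa^2}\mathcal H^*$ requires that $\mathcal Q$ applied to the Cauchy data of $v^s$ on $\mathcal C$ returns the far field of $v^s$. The potential $v^s$ is radiating and solves the equation in $\mathbb R^2\setminus\overline D\supset \mathbb R^2\setminus\overline{\mathrm{Int}(\mathcal C)}$. By uniqueness of the exterior clamped problem outside $\mathcal C$, $w$ coincides with $v^s$ there, so their far-field patterns agree. The far field of $v^s$ is obtained by inserting the asymptotics of $\mathbb G$ and $\partial_{\nu_y}\mathbb G$ under the integral, uniformly for $y\in\partial D$.

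\textbf{Mapping properties.} We need $\mathcal Q$ bounded from $H^{3/2}(\mathcal C)\times H^{1/2}(\mathcal C)$, which follows from well-posedness of the exterior problem. The range of $\mathcal N$, as a difference of potentials with smooth kernels on $\mathcal C$, lies in these smoother spaces, so $\mathcal Q\mathcal N$ makes sense. We also need $\mathcal Q^\top$ to map $L^2(\mathbb S^1)$ into $[L^2(\mathcal C)]^2$ (or a dual space on which $\mathcal N$ acts). Here I would regard $\mathcal Q^\top$ as the transpose with respect to the bilinear pairings, mapping into the dual $H^{-3/2}(\mathcal C)\times H^{-1/2}(\mathcal C)$. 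I would then extend $\mathcal N$ and $\mathcal B^\top$ to these dual spaces, which is possible because their kernels are smooth for $x\in\mathcal C$, $y\in\partial D$ (since $\overline D\subset\mathrm{Int}(\mathcal C)$). The transposition identity $\mathcal B^\top\mathcal Q^\top=(\mathcal Q\mathcal B)^\top$ is then just the general rule $(AB)^\top=B^\top A^\top$ for bilinear pairings.

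\textbf{Main obstacle.} The step most likely to need care is this functional-analytic bookkeeping: consistently using bilinear (not sesquilinear) pairings so that the transposes compose correctly, and making sure the extension of $\mathcal N$ to distributional densities on $\mathcal C$ still satisfies the factorization of Corollary \ref{NF_factorization_thm}. I would handle the latter by density of $[L^2(\mathcal C)]^2$ and continuity of both sides.

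Finally, the composition maps $L^2(\mathbb S^1)\to L^2(\mathbb S^1)$ since $\mathcal H:L^2(\mathbb S^1)\to H^{3/2}(\partial D)\times H^{1/2}(\partial D)$, $\mathcal S^{-1}$ maps this space onto $H^{-3/2}(\partial D)\times H^{-1/2}(\partial D)$, and $\mathcal H^*$ maps back into $L^2(\mathbb S^1)$.
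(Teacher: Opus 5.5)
Your proposal is correct and follows the paper's proof essentially line for line: you insert $\mathcal N=-\mathcal B\mathcal S^{-1}\mathcal B^\top$ from Corollary~\ref{NF_factorization_thm}, substitute $\mathcal Q\mathcal B=-\frac{1}{2\kappa^2}\mathcal H^*$ and its transpose, and collapse $(\mathcal H^*)^\top\mathcal R=\mathcal H\mathcal R^2=\mathcal H$. Your additional bookkeeping is a sound tightening of points the paper's proof leaves implicit, not a change of route: this covers bilinear pairings throughout, $\mathcal Q^\top$ landing in $H^{-3/2}(\mathcal C)\times H^{-1/2}(\mathcal C)$, and extending $\mathcal N$ and $\mathcal B^\top$ there via smooth kernels and density.
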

\begin{proof}
We recall that our assumption on the wavenumber $\kappa$ ensures that $\mathcal S^{-1}$ is a well-defined and bounded operator. Applying Corollary \ref{NF_factorization_thm}, we conclude that
\begin{align*}
4\kappa^4\mathcal Q\mathcal N\mathcal Q^\top \mathcal R&=4\kappa^4\mathcal Q(-\mathcal B\mathcal S^{-1}\mathcal B^\top)\mathcal Q^\top \mathcal R\\
&=-4\kappa^4(\mathcal Q\mathcal B)\mathcal S^{-1}(\mathcal B^\top \mathcal Q^\top)\mathcal R\\
&=-4\kappa^4\left(-\frac{1}{2\kappa^2}\mathcal H^{*}\right)\mathcal S^{-1}\left(-\frac{1}{2\kappa^2}\mathcal H^{*}\right)^\top\mathcal R\\
&=-\mathcal{H}^{*}\mathcal S^{-1}\mathcal H
\end{align*}
where we have used the fact that $\mathcal{R}=\mathcal{R}^{-1}$, proving the claim.
\end{proof}
We now proceed to relate our augmented near-field operator to the far-field operator corresponding to the biharmonic clamped scattering problem. To this end, for any $d \in \mathbb{S}^1$ define the plane wave incident field
\[
u^i(x,d)=\mathrm e^{\mathrm i\kappa x\cdot d} \,\,\text{where we denote the corresponding scattered field by } U^{s}(\cdot \, , d). 
\]
Here, the radiating scattered field $U^{s}(\cdot \, , d)$ satisfies \eqref{eqnbcs}-\eqref{SRCs} with the aforementioned plane wave incident field. The far-field operator associated with this biharmonic clamped scattering
problem is defined by
\begin{align}\label{FF_operator}
\mathcal F:L^2(\mathbb S^1)\to L^2(\mathbb S^1)
\quad \text{given by}\quad
(\mathcal Fg)(\hat x)
=
\int_{\mathbb S^1}
U^\infty(\hat x,d)g(d)\,ds(d),
\end{align}
where $U^\infty(\hat x,d)$ denotes the far-field pattern of the scattered
field $U^s(x,d)$.  By linearity of the direct problem \eqref{eqnbcs}-\eqref{SRCs}, the far-field data $\mathcal Fg$ can be interpreted as the far-field pattern corresponding to the boundary data $-\mathcal Hg$ with $\mathcal H$ defined by \eqref{herglotz_op}. It was shown in \cite{HarrisLiOzochiawaeze2026} that the far-field operator $\mathcal F$ is injective and has dense range if $\kappa$ is not a clamped transmission eigenvalue, i.e., if there exists no non-trivial solution to the following eigenvalue problem 
\begin{align}\label{eigvalue_problem}
\begin{dcases}
\Delta p-\kappa^2p=0 \quad \text{in } \mathbb R^2\setminus\overline D \quad \text{and}\quad \Delta q+\kappa^2q=0 &\quad \text{in } D, \\[2mm]
p+q=0 \quad \text{and}\quad \partial_\nu p+\partial_\nu q=0 &\quad \text{on } \partial D 
\end{dcases}
\end{align}
with $(p,q)\in H^1(\mathbb R^2\setminus\overline D)\times H^1(D)$. We introduce another auxiliary operator in relation to the far-field operator $\mathcal F$. We define the data-to-pattern operator
\begin{align}\label{G_operator}
\mathcal G : H^{3/2}(\partial D)\times H^{1/2}(\partial D) &\to L^2(\mathbb S^1) \quad \text{which is given by}\quad \mathcal G(\phi,\psi)^\top=v^{\infty},
\end{align}
where $v$ satisfies the exterior boundary value problem \eqref{ext_BVP}-\eqref{clamped_bcs_inhomog} given the boundary data $(\phi,\psi)^\top$ and $v^\infty$ denotes the
far-field pattern of $v$. By superposition we obtain the factorization 
\begin{align}
\mathcal F=-\mathcal G\mathcal H.
\end{align}
 We now introduce a new symmetric factorization of the far-field operator $\mathcal F$ for our purposes. We note this factorization is distinct from the factorization for $\mathcal F$ derived in [Theorem 3.5, \cite{Zhu2026}].
\begin{Theorem}
    Assume $\kappa$ is not the fourth root of the Dirichlet eigenvalue of the bilaplacian $\Delta^2$ in $D$. Then we obtain the following factorization for the far-field operator $\mathcal F$ corresponding to the biharmonic clamped scattering problem:
    \begin{align}\label{symm_fac2}
        \mathcal F&=\frac{1}{2\kappa^2}\mathcal H^{*}\mathcal S^{-1}\mathcal H.
    \end{align}
\end{Theorem}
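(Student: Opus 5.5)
Starting from the factorization $\mathcal F=-\mathcal G\mathcal H$ obtained by superposition, it suffices to show that
\[
\mathcal G=-\frac{1}{2\kappa^2}\mathcal H^{*}\mathcal S^{-1}.
\]
This is the far-field analogue of the relation $\mathcal U=\mathcal B\mathcal S^{-1}$ used in the proof of Corollary \ref{NF_factorization_thm}. I would establish it through the intermediate identity $\mathcal G\mathcal S=-\frac{1}{2\kappa^2}\mathcal H^{*}$, and then invert $\mathcal S$, which is an isomorphism under the stated assumption on $\kappa$ by [Proposition 7, \cite{BourgeoisRecoquillay2020}].

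To prove $\mathcal G\mathcal S=-\frac{1}{2\kappa^2}\mathcal H^*$, fix $(\tau,\sigma)^\top\in H^{-3/2}(\partial D)\times H^{-1/2}(\partial D)$ and consider the potential
\[
v^s(x)=\int_{\partial D}\Big(\mathbb G(x,y)\tau(y)+\partial_{\nu_y}\mathbb G(x,y)\sigma(y)\Big)\,ds(y),\qquad x\in\mathbb R^2\setminus\overline D.
\]
By the potential theory cited for \eqref{coercive_op}, $v^s$ lies in $H^2_{\rm loc}(\mathbb R^2\setminus\overline D)$, solves \eqref{ext_BVP}, and satisfies the radiation conditions \eqref{SRCs}. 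The radiation conditions hold because $\mathbb G$ does uniformly in $y$ over compact sets, and the evanescent part $H_0^{(1)}(\mathrm i\kappa|x-y|)$ decays exponentially. The exterior trace of $(v^s,\partial_\nu v^s)$ on $\partial D$ is $\mathcal S(\tau,\sigma)^\top$ by definition. Uniqueness for the exterior clamped problem \eqref{ext_BVP}--\eqref{clamped_bcs_inhomog}, which is part of the well-posedness in \cite{BourgeoisHazard2020}, then identifies $v^s$ with the solution whose boundary data are $\mathcal S(\tau,\sigma)^\top$. Hence $\mathcal G\mathcal S(\tau,\sigma)^\top=v^{\infty}$.

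The far-field pattern computation already carried out before Lemma \ref{Factorization_analysis} gives $v^\infty=-\frac{1}{2\kappa^2}\mathcal H^*(\tau,\sigma)^\top$. That computation uses $\mathbb G^\infty(\hat x,y)=-\frac{1}{2\kappa^2}\mathrm e^{-\mathrm i\kappa\hat x\cdot y}$ together with interchange of the asymptotic limit with integration and with $\partial_{\nu_y}$, justified by uniform asymptotics for $y$ in a compact set away from $|x|\to\infty$. In fact this is exactly the identity $\mathcal Q\mathcal B=-\frac{1}{2\kappa^2}\mathcal H^*$, since $\mathcal Q$ only reads off the far field of the radiating solution outside $\mathcal C$, and $v^s$ restricted there is that solution.

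Combining, $\mathcal G=-\frac{1}{2\kappa^2}\mathcal H^*\mathcal S^{-1}$, and therefore
\[
\mathcal F=-\mathcal G\mathcal H=\frac{1}{2\kappa^2}\mathcal H^*\mathcal S^{-1}\mathcal H,
\]
which is \eqref{symm_fac2}. The main obstacle is the justification in the second step: that the boundary-integral potential with densities only in $H^{-3/2}\times H^{-1/2}$ is the $H^2_{\rm loc}$ radiating solution whose traces equal $\mathcal S(\tau,\sigma)^\top$, with no jump issues since $\mathbb G$ is $C^1$ across $\partial D$. I would take this from the mapping properties in \cite{BourgeoisRecoquillay2020}, first verifying it for smooth densities and then extending by density and continuity of $\mathcal S$, $\mathcal G$ and $\mathcal H^*$. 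Alternatively, one can bypass it using $\mathcal Q\mathcal U=\mathcal G$ and $\mathcal U=\mathcal B\mathcal S^{-1}$, which gives $\mathcal G=\mathcal Q\mathcal B\mathcal S^{-1}=-\frac{1}{2\kappa^2}\mathcal H^*\mathcal S^{-1}$ directly. Here $\mathcal Q\mathcal U=\mathcal G$ holds because $v$ restricted to the exterior of $\mathcal C$ is the unique radiating solution with Cauchy data $\mathcal U(\phi,\psi)^\top$.
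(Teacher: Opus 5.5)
Your proposal is correct and follows the paper's proof. Both arguments establish $\mathcal H^{*}=-2\kappa^2\mathcal G\mathcal S$ by taking the far-field pattern of the combined potential $\int_{\partial D}(\mathbb G(\cdot,y)\tau(y)+\partial_{\nu_y}\mathbb G(\cdot,y)\sigma(y))\,ds(y)$, whose Cauchy data on $\partial D$ are $\mathcal S(\tau,\sigma)^\top$, then invert $\mathcal S$ and substitute into $\mathcal F=-\mathcal G\mathcal H$; the paper only rescales the potential by $-2\kappa^2$ so that its far field is exactly $\mathcal H^{*}(\tau,\sigma)^\top$, and your alternative $\mathcal G=\mathcal Q\mathcal U=\mathcal Q\mathcal B\mathcal S^{-1}$ is also valid.
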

\begin{proof}
The proof follows the same strategy as that of Theorem 3.5 in
\cite{Zhu2026}, which we provide here for completeness. We recall that the adjoint of the Herglotz wave operator $\mathcal H^{*}$ is given by 
    \[
    \mathcal H^{*}\begin{pmatrix}
        \tau\\
        \sigma
    \end{pmatrix}(x)=\int_{\partial D}\left(\mathrm e^{-\mathrm i\kappa \hat x\cdot y}   \tau(y)+\partial_{\nu_y} \mathrm e^{-\mathrm i\kappa \hat x\cdot y}\sigma(y) \right)\,ds(y),\quad \hat x\in \mathbb S^1.
    \]
    Define the function 
    \[
    v^s(x)=-2\kappa^2\int_{\partial D}\left(\mathbb G(x,y)\tau(y)+\partial_{\nu_y}\mathbb G(x,y)\sigma(y)\right)\,ds(y),\quad x\in \mathbb R^2\setminus\overline D.
    \]
    It then follows from the asymptotic behavior of the fundamental solution $\mathbb G(\cdot,y)$ that the far-field pattern $v^{\infty}$ of $v$ is $\mathcal H^{*}(\tau,\sigma)^\top$. By definition of the data-to-pattern $\mathcal G$ in \eqref{G_operator}, we obtain
    \[
    \mathcal H^{*}(\tau,\sigma)^\top =\mathcal G(v^s|_{\partial D},\partial_\nu v^s|_{\partial D})^\top =-2\kappa^2 \mathcal G\mathcal S(\tau,\sigma)^\top.
    \]
    Thus we obtain $\mathcal H^{*}=-2\kappa^2\mathcal G\mathcal S$. By our assumption on $\kappa$, we have $\mathcal S^{-1}$ is well-defined and bounded. Therefore, we obtain
    \[
   - \frac{1}{2\kappa^2}\mathcal H^{*}\mathcal{S}^{-1}=\mathcal G,
    \]
    which leads to the desired result by $\mathcal F=-\mathcal G\mathcal H$.
\end{proof}

With this result, we see that $4\kappa^4Q\mathcal N\mathcal Q^\top \mathcal R$ coincides with the far-field operator up to a constant scaling. This implies that this augmentation can be used to symmetrize the factorization of the near-field operator $\mathcal N$ and apply the results found in \cite{Zhu2026}. Note that this is similar to the idea found in \cite{Hu-nfFM}, where a so-called outgoing-to-incoming operator is used to symmetrize the factorization of the near-field operator associated with acoustic problems. Additionally, we remark that, in Theorem 3.11 of \cite{Zhu2026}, it was shown
that the orthonormal eigensystem of the far-field operator $\mathcal F=-2\kappa^2 Q\mathcal N\mathcal Q^\top \mathcal R$ can be used to uniquely recover the clamped obstacle $D$. This in turn says that the our (augmented) near-field operator can uniquely recover the obstacle. This is given in the following result.

\begin{Theorem}\label{THM_FM_NF_data}
    Assume that $\kappa$ is neither the fourth root of a Dirichlet eigenvalue of the bilaplacian $\Delta^2$ in $D$ nor a clamped transmission eigenvalue. Then we have that
    \begin{align}\label{range_test}
        z\in D\quad \text{if and only if}\quad \sum_{j=1}^\infty \frac{1}{|\lambda_j|}\left|\left(\mathrm e^{-\mathrm i\kappa \hat x\cdot z},\xi_j (\hat{x})\right)_{L^2(\mathbb S^1)}\right|^2<\infty,
    \end{align}
    where the pair $(\lambda_j, \xi_j)\in \mathbb C_{\neq 0}\times L^2(\mathbb S^1)$ consists of the eigenvalues and eigenfunctions of the augmented near-field operator $-2\kappa^2\mathcal Q\mathcal N\mathcal Q^\top \mathcal R$.
\end{Theorem}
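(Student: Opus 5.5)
The plan is to reduce the statement to the factorization-method range characterization for the far-field operator $\mathcal F$, and then import it through the identity between $\mathcal F$ and the augmented near-field operator. First, combining Lemma \ref{Factorization_analysis} with the symmetric factorization \eqref{symm_fac2} gives
\[
-2\kappa^2\mathcal Q\mathcal N\mathcal Q^\top\mathcal R=\frac{1}{2\kappa^2}\mathcal H^{*}\mathcal S^{-1}\mathcal H=\mathcal F .
\]
So the eigensystem $(\lambda_j,\xi_j)$ in the statement is exactly the eigensystem of $\mathcal F$. It therefore suffices to prove the range test \eqref{range_test} for $\mathcal F$ itself; this is the content of Theorem 3.11 in \cite{Zhu2026}.

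For completeness I would verify the ingredients of the abstract factorization theorem (the $F_\#$ or normal-operator version of Kirsch's range identity) for $\mathcal F=\mathcal H^*\mathcal T\mathcal H$ with $\mathcal T=\frac{1}{2\kappa^2}\mathcal S^{-1}$. The steps are as follows.
\begin{enumerate}
\item $\mathcal F$ is normal and $\mathcal I+\frac{\mathrm i\kappa}{4\pi}\mathcal F$ (with the biharmonic normalization constant) is unitary. This follows from Green's identities for $\Delta^2-\kappa^4$ applied to the scattered fields, using the splitting into Helmholtz and modified-Helmholtz parts, where the evanescent part does not contribute at infinity. This gives eigenvalues on a circle and a complete orthonormal eigensystem.
\item $\mathcal H$ is compact and injective with dense range. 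Injectivity of $\mathcal H^*$ should follow from the assumption that $\kappa$ is not a clamped transmission eigenvalue. Injectivity of $\mathcal H$ follows since a Herglotz function with vanishing clamped Cauchy data on $\partial D$ would produce a Dirichlet bilaplacian eigenfunction in $D$, which is excluded.
\item $\mathcal T$ splits as a coercive part plus a compact part. Here I would write $\mathcal S=\mathcal S_0+\text{compact}$, where $\mathcal S_0$ is built from the $\kappa=\mathrm i$ (static-type) kernel, which is coercive on $H^{-3/2}\times H^{-1/2}$, and then pass coercivity to $\mathcal S^{-1}$.
\item $\operatorname{Im}\langle \mathcal T\varphi,\varphi\rangle$ has a fixed sign and vanishes only for $\varphi=0$ on the relevant closure of the range.
\end{enumerate}
With these, the theorem yields $\operatorname{Range}(\mathcal H^*)=\operatorname{Range}(\mathcal F_\#^{1/2})$, and Picard's criterion gives the series condition in \eqref{range_test}.

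The last step is to show that $z\in D$ if and only if $\phi_z(\hat x)=\mathrm e^{-\mathrm i\kappa\hat x\cdot z}\in\operatorname{Range}(\mathcal H^*)$. Up to the factor $-2\kappa^2$, $\phi_z$ is the far-field pattern of $\mathbb G(\cdot,z)$.
\begin{itemize}
\item If $z\in D$, the single-layer representation solving the exterior clamped problem with data $(\mathbb G(\cdot,z),\partial_\nu\mathbb G(\cdot,z))$ on $\partial D$ exhibits $\phi_z=\mathcal H^*(\tau,\sigma)^\top$. This uses that $\mathcal S$ is onto under the Dirichlet-eigenvalue assumption.
\item If $z\notin D$, Rellich's lemma and unique continuation (applied to each Helmholtz component) would force the radiating field to equal $\mathbb G(\cdot,z)$ outside $D\cup\{z\}$. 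This contradicts the local $H^2$ regularity near $z$, since $\mathbb G$'s singularity, though weaker than the Helmholtz one, still fails to be $H^2$ at $z$. The case $z\in\partial D$ is handled by the standard approaching-sequence argument.
\end{itemize}

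I expect the main obstacle to be the coercivity and sign conditions for $\mathcal S^{-1}$ in the fourth-order setting, particularly on the $\sigma$ component in $H^{-1/2}$. I would not redo these; instead I would cite Theorem 3.11 of \cite{Zhu2026}, after checking that their middle operator agrees with ours up to the scalar factor, which does not affect the eigenvalue moduli in \eqref{range_test}.
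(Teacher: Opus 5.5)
Your core argument is the paper's proof. You combine Lemma~\ref{Factorization_analysis} with \eqref{symm_fac2} to get $-2\kappa^2\mathcal Q\mathcal N\mathcal Q^\top\mathcal R=\mathcal F$, so the two eigensystems coincide, and then you import the range test from [Theorem 3.11, \cite{Zhu2026}]. The paper additionally records that $\mathcal F$ is compact, injective and normal off clamped transmission eigenvalues. That reduction is fine. Several statements in your ``for completeness'' verification are false, however, and should be removed or corrected.

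First, $\mathcal H$ does not have dense range, and $\mathcal H^*$ is never injective, whatever $\kappa$ is. Take $w=\Phi_{\mathrm M}(\cdot,z_0)$ with $z_0\in D$, or any radiating solution of $\Delta w-\kappa^2 w=0$ outside $D$. Then $(\tau,\sigma)^\top=\mathcal S^{-1}(w|_{\partial D},\partial_\nu w|_{\partial D})^\top\neq 0$. Yet $\mathcal H^*(\tau,\sigma)^\top=-2\kappa^2\mathcal G(w|_{\partial D},\partial_\nu w|_{\partial D})^\top=-2\kappa^2 w^\infty=0$, because far fields see only propagating parts. The transmission-eigenvalue assumption therefore enters through the injectivity of $\mathcal F$, equivalently the strict sign condition on $\overline{\operatorname{Range}(\mathcal H)}$, which you do mention. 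It does not enter through $\mathcal H^*$. Conversely, injectivity of $\mathcal H$ holds for every $\kappa$ by unique continuation for the Helmholtz equation, so it needs no Dirichlet assumption. Also, $\kappa=\mathrm i$ gives $\kappa^4=1$, i.e.\ $\Delta^2-1$ again, which is not a coercive reference operator.

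Second, and more seriously, your $z\notin D$ argument fails as written. Equality of far fields identifies only the propagating parts $v_{\mathrm H}=-\frac{1}{2\kappa^2}(\Delta v-\kappa^2 v)$; nothing forces the full field to equal $-2\kappa^2\mathbb G(\cdot,z)$. Moreover, $\mathbb G(\cdot,z)\in H^2_{\mathrm{loc}}(\mathbb R^2)$, because the logarithms cancel: $\mathbb G(x,z)=-\frac{\mathrm i}{8\kappa^2}-\frac{1}{8\pi}|x-z|^2\log|x-z|+\text{(smoother terms)}$. So there is no $H^2$ contradiction.

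For $z\notin\overline D$, the correct argument uses $-2\kappa^2\mathbb G=\Phi_{\mathrm H}-\Phi_{\mathrm M}$. Rellich's lemma and unique continuation give $v_{\mathrm H}=\Phi_{\mathrm H}(\cdot,z)$ in $\mathbb R^2\setminus(\overline D\cup\{z\})$. This is impossible, since $v_{\mathrm H}$ is smooth at $z$ while $\Phi_{\mathrm H}(\cdot,z)$ is logarithmically singular.

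For $z\in\partial D$, no approaching-sequence argument can work. The restriction of $-2\kappa^2\mathbb G(\cdot,z)$ to $\mathbb R^2\setminus\overline D$ is itself a radiating $H^2_{\mathrm{loc}}$ solution with Cauchy data in $H^{3/2}(\partial D)\times H^{1/2}(\partial D)$. Hence $\mathrm e^{-\mathrm i\kappa\hat x\cdot z}\in\operatorname{Range}(\mathcal G)=\operatorname{Range}(\mathcal H^*)$, and by the very range identity you invoke, the series converges there. The test therefore separates $\overline D$ from $\mathbb R^2\setminus\overline D$, and the equivalence should be read for $z\notin\partial D$. The paper's proof, being a pure citation, does not address boundary points, and your sketch should not claim to exclude them.
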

\begin{proof}
By the factorizations \eqref{NF_fixed_factorization} and~\eqref{symm_fac2}, the augmented near-field operator satisfies
\begin{equation}\label{idenity}
-2\kappa^2\mathcal Q\mathcal N\mathcal Q^\top\mathcal R
=\mathcal F,
\end{equation}
where $\mathcal F$ denotes the far-field operator. It is known that, by excluding clamped transmission eigenvalues $\kappa$, we have that $\mathcal F$ is a compact, injective, and normal operator (see, for example, the results [Theorem 3.4, \cite{HarrisLeeLi2025}] and [Theorem 3.3(c), \cite{Zhu2026}]). The result is then a direct conscience of applying [Theorem 3.11, \cite{Zhu2026}]. 
\end{proof}
 This allows us to define a `computable' imaging functional to recover the obstacle from the measurements $\mathbb{M}$. Indeed, notice that from the range test property (\ref{range_test}), we obtain
\begin{align}\label{indicator_test}
    z\in D\quad \text{if and only if}\quad W(z)=\left[\sum_{j=1}^\infty \frac{1}{|\lambda_j|}\left|\left(\mathrm e^{-\mathrm i\kappa \hat x\cdot z},\xi_j (\hat{x})\right)_{L^2(\mathbb S^1)}\right|^2\right]^{-1}>0,
\end{align}
where $(\lambda_j, \xi_j)\in \mathbb C_{\neq 0}\times L^2(\mathbb S^1)$ is the orthonormal eigensystem of the augmented near-field operator $-2\kappa^2\mathcal Q\mathcal N\mathcal Q^\top \mathcal R$. This implies that 
\[
\chi_D(z)=\text{sign}(W(z))\coloneqq \begin{dcases}
    1,\quad z\in D\\
    0,\quad z\notin D.
\end{dcases}
\]
Since the near-field operator $\mathcal N$ is known and the operators $\mathcal Q$ and $\mathcal R$ can be precomputed, the indicator $W(z)$ defined in (\ref{indicator_test}) can be computed directly from the measured data. Therefore, from the near–field data we can
compute $W(z)$ and provide the contour plot to recover $D$.

Notice that since the near-field operator $\mathcal N$ is compact, this would imply that $-2\kappa^2\mathcal Q\mathcal N\mathcal Q^\top\mathcal R$ is compact as well. This would imply that the eigenvalues $|\lambda_j|\to 0$ rapidly as $j\to\infty$. By definition of the imaging function $W(z)$, we remark that dividing by these eigenvalues will cause numerical instabilities when trying to recover the clamped obstacle $D$. To ensure stability and a rigorous characterization of the clamped obstacle, we need to incorporate a regularization scheme to the factorization method as in, e.g., \cite{audibert2014lsm} and \cite{harris2023rfm}. More precisely, we assume that the measured near-field operator $\mathcal N^\delta$ has added random error such that
\[
||\mathcal N^\delta -\mathcal N||_{\mathcal L([L^2(\mathcal C)]^2)}=\mathcal O(\delta) \quad \text{implies}\quad ||\mathcal Q(\mathcal N^\delta-\mathcal N)\mathcal Q^\top \mathcal R||_{\mathcal L(L^2(\mathbb S^1))}=\mathcal O(\delta)\quad \text{as }\delta\to 0.
\]
This will be the case in many physical applications where the measured near-field data is polluted with random noise.

In order to have a stable reconstruction with noisy multi-static near-field data, the main result in \cite{harris2023rfm} implies that if we define the regularized imaging function
\begin{equation}
\label{reg_W}
W^{\mathrm{Reg}}(z;\alpha,\delta)
=
\left[
\sum_{j=1}^{\infty}
\frac{\phi^2(|\lambda_j^\delta|,\alpha)}
{|\lambda_j^\delta|}
\left|
\left(
\mathrm e^{-\mathrm ik\hat{x}\cdot z},
\xi_j^\delta(\hat{x})
\right)_{L^2(\mathbb{S}^1)}
\right|^2
\right]^{-1},
\end{equation}
 where $(\lambda_j^\delta, \xi_j^\delta)\in \mathbb C_{\neq 0}\times L^2(\mathbb S^1)$ are the eigenvalues and eigenvectors for the perturbed operator $-2 \kappa^2 \mathcal Q\mathcal N^\delta\mathcal Q^\top \mathcal R$, then we have that
\begin{align}\label{range_test_W_reg}
z\in D\quad \text{if and only if}\quad \liminf_{\alpha\to 0^{+}}\liminf_{\delta\to 0^{+}}W^{\mathrm{Reg}}(z,\alpha;\delta)>0.
\end{align}
Appealing to the continuity of the spectrum, we obtain that for any $z$,
\begin{align}
    \liminf_{\alpha\to 0^{+}}\liminf_{\delta\to 0^{+}}W^{\mathrm{Reg}}(z,\alpha;\delta)>0\quad \text{if and only if}\quad W(z)>0.
\end{align}
Here, it is assumed that $\phi(t,\alpha)$ is a non-negative continuous filter function derived from a regularization scheme such that
\begin{equation}\label{filter_fnc}
    \lim_{\alpha\to 0}\phi(t,\alpha)=1,\quad \phi(t,\alpha)\leq C_{\mathrm{Reg}}\quad \text{and}\quad \phi(t,\alpha)\leq C_{\alpha}t\quad \text{for all }t,\alpha>0,
\end{equation}
where the constant $C_{\mathrm{Reg}}$ is independent of the regularization parameter $\alpha$. For the numerical examples, we will employ two standard choices of
regularization filters, namely the Tikhonov filter and the spectral cutoff filter. They are defined by
\begin{equation}\label{tikn}
\phi_{\mathrm{Tik}}(t,\alpha)
=
\frac{t}{t^2+\alpha},
\end{equation}
and
\begin{equation}\label{cutoff}
\phi_{\mathrm{cut-off}}(t,\alpha)
=
\begin{cases}
1, & t>\alpha,\\
0, & t\leq \alpha,
\end{cases}
\end{equation}
respectively.

Thus, in the case of noisy measurements, it would be optimal to use the regularized imaging function $W^{\mathrm{Reg}}(z,\alpha;\delta)$ defined by \eqref{reg_W}. Though one can employ other common regularization schemes like Landweber and GLSM regularization, we will apply Tikhonov regularization and spectral cutoff due to its simplicity, stability, and well-understood spectral filtering properties.
\subsection{Precomputing the Far-Field Transform}
We observe that the operators $\mathcal Q$ and $\mathcal R$ are defined without the knowledge of the unknown obstacle $D$. This implies that they can be computed without any a priori knowledge of the scatterer. Indeed, for the reflection operator 
\[
(\mathcal Rg)(\hat x)=g(-\hat x)
\]
we note that
\[
g(\hat x)=g((\cos \theta,\sin \theta)),\quad \text{for}\quad\theta\in [0,2\pi),
\]
which can be viewed as a function of a single angular variable, i.e., $g(\hat x)$. Notice that the sum of angles formulas gives
\[
- \hat x=(\cos(\theta+\pi),\sin(\theta+\pi))\quad \text{for }\theta\in[0,2\pi).
\]
With this, we obtain, as in \cite{HarrisKleefeld2026}, that $(\mathcal Rg)(\theta)=g(\theta+\pi)$ and by some simple calculations
\[
(\mathcal Rg)(\theta)=\int_0^{2\pi}R_{\mathrm{func}}(\theta,\varphi)g(\varphi)\,d\varphi\quad \text{where}\quad R_{\mathrm{func}}(\theta,\varphi)=\frac{1}{2\pi}\sum_{|m|=0}^\infty \mathrm{e}^{\mathrm im(\theta-\varphi+\pi)}
\]
by using a Fourier series to express $g$ as has derived in \cite{nf-fft-dsm}.

For simplicity, we consider the case where $\mathcal C=\partial B_R$ is the boundary of a disk centered at the origin with radius $R>0$. We now precompute $\mathcal Q$ in a manner similar to \cite{HarrisNyugens2022}. We identify $\mathcal C$ with $(0,2\pi)$ via the parametrization $x=R(\cos\theta,\sin\theta)$ and write
\[
f(\theta),\, h(\theta)\in H^{3/2}(0,2\pi)\times H^{1/2}(0,2\pi).
\]
{We consider a radiating solution $w=w(r,\theta)\in H_{\text{loc}}^2(\mathbb R^2\setminus \overline B_R)$ of the exterior boundary value problem
\begin{align}\label{circle_bvp}
\begin{dcases}
\Delta^2w-\kappa^4w=0, & \text{in }\mathbb R^2\setminus\overline{B_R},\\[2mm]
w=f,\quad \partial_r w=h, & \text{on }\partial B_R.
\end{dcases}
\end{align}}
We expand the boundary data in Fourier series
\[
f(\theta)=\sum_{|m|=0}^\infty f_m \mathrm e^{\mathrm im\theta}
\quad \text{and} \quad
h(\theta)=\sum_{|m|=0}^{\infty} h_m \mathrm e^{\mathrm im\theta},
\]
with
\[
f_m=\frac{1}{2\pi}\int_0^{2\pi} f(\varphi)\mathrm e^{-\mathrm im\varphi}\,d\varphi
\quad\text{and}\quad
h_m=\frac{1}{2\pi}\int_0^{2\pi} h(\varphi)\mathrm e^{-\mathrm im\varphi}\,d\varphi.
\]
For each Fourier mode $m\in\mathbb Z$, we consider an ansatz for the unique radiating solution $w$ to \eqref{circle_bvp} is of the form
\[
w(r,\theta)=\sum_{|m|=0}^\infty w_m(r)\mathrm e^{\mathrm im\theta}.
\]
Using the factorization $\Delta^2-\kappa^4=(\Delta-\kappa^2)(\Delta+\kappa^2)$, the radial ODE admits two linearly independent solutions and gives that 
\[
w_m(r)=a_m H_m^{(1)}(\kappa r)+b_m H_m^{(1)}(\mathrm i\kappa r),
\]
where $H_m^{(1)}$ is the Hankel function of the first kind. To determine the coefficients we imposing the boundary conditions at $r=R$ gives, for each $m\in\mathbb Z$,
\[
w_m(R)=f_m,
\qquad
\partial_r w_m(R)=h_m,
\]
which yields the linear system
\[
\begin{pmatrix}
H_m^{(1)}(\kappa R) & H_m^{(1)}(\mathrm i\kappa R)\\[1ex]
\kappa (H_m^{(1)})'(\kappa R) &
\mathrm i\kappa {H_m^{(1)}}'(\mathrm i\kappa R)
\end{pmatrix}
\begin{pmatrix}
a_m\\
b_m
\end{pmatrix}
=
\begin{pmatrix}
f_m\\
h_m
\end{pmatrix}.
\]
Solving for $a_m$, we obtain
\[
a_m=
\frac{
\mathrm i\kappa {H_m^{(1)}}'(\mathrm i\kappa R) f_m
-
H_m^{(1)}(\mathrm i\kappa R) h_m
}{
\mathrm i\kappa H_m^{(1)}(\kappa R){H_m^{(1)}}'(\mathrm i\kappa R)
-
\kappa (H_m^{(1)})'(\kappa R)H_m^{(1)}(\mathrm i\kappa R)
}.
\]
Using the fact that 
\[
H_m^{(1)}(\kappa r)
=
\sqrt{\frac{2}{\pi \kappa r}}
\mathrm e^{\mathrm i(\kappa r - m\pi/2 - \pi/4)}
+O(r^{-3/2}),
\quad \text{and} \quad |H_m^{(1)}(\text{i}\kappa r)|
=O\left(\frac{ \text{e}^{- \kappa r } }{\sqrt{r}}\right),
 \quad \text{as} \, \, r\to\infty,
\]
together with the definition
\[
w(x)=\frac{\mathrm e^{\mathrm i\pi/4}}{\sqrt{8\pi \kappa}}
\frac{\mathrm e^{\mathrm i\kappa |x|}}{|x|^{1/2}}
\left\{
w^\infty(\hat x)+O(|x|^{-1})
\right\},
\]
we deduce that only the far-field pattern can be express as 
\[
w^\infty(\theta)
=
\frac{4}{\mathrm i}
\sum_{|m|=0}^\infty
a_m \mathrm e^{-\mathrm im\pi/2} \mathrm e^{\mathrm im\theta}.
\]
Substituting the expression for $a_m$, the far-field transform $\mathcal Q$ is given by
\[
(\mathcal Q (f,h)^\top)(\theta)
=
\frac{4}{\mathrm i}
\sum_{|m|=0}^\infty
\frac{
\mathrm i\kappa {H_m^{(1)}}'(\mathrm i\kappa R) f_m
-
H_m^{(1)}(\mathrm i\kappa R) h_m
}{
\mathrm i\kappa H_m^{(1)}(\kappa R){H_m^{(1)}}'(\mathrm i\kappa R)
-
\kappa (H_m^{(1)})'(\kappa R)H_m^{(1)}(\mathrm i\kappa R)
}
\mathrm e^{\mathrm im(\theta-\pi/2)}.
\]
We may therefore rewrite the far-field transform in kernel form as
\[
(\mathcal Q(f,h)^\top)(\theta)
=
\int_{0}^{2\pi}
Q_{\mathrm{func}}(\theta,\varphi)
\begin{pmatrix}
f(\varphi)\\
h(\varphi)
\end{pmatrix}
\,d\varphi.
\]
Here, the (vector-valued) kernel is defined by
\[
Q_{\mathrm{func}}(\theta,\varphi)
=
\left(\sum_{|m|=0}^\infty A_m \mathrm e^{\mathrm im(\theta-\varphi-\pi/2)},\sum_{|m|=0}^\infty B_m \mathrm e^{\mathrm im(\theta-\varphi-\pi/2)}\right),
\]
with Fourier coefficients
\begin{align}\label{first_coeff}
A_m &=
\frac{2}{\pi}
\frac{{H_m^{(1)}}'(\mathrm i\kappa R)}
{\mathrm i H_m^{(1)}(\kappa R){H_m^{(1)}}'(\mathrm i\kappa R)
-(H_m^{(1)})'(\kappa R)H_m^{(1)}(\mathrm i\kappa R)},
\end{align}
and
\begin{align}\label{second_coeff}
B_m &=
\frac{2\mathrm i}{\pi\kappa}
\frac{H_m^{(1)}(\mathrm i\kappa R)}
{\mathrm i H_m^{(1)}(\kappa R){H_m^{(1)}}'(\mathrm i\kappa R)
-(H_m^{(1)})'(\kappa R)H_m^{(1)}(\mathrm i\kappa R)}.
\end{align}
In the numerical evaluation of the imaging functional $W(z)$ given by (\ref{indicator_test}), we approximate $\mathcal Q$ by a truncated operator $\mathcal Q_M$ defined via Fourier truncation of its coefficient representation, i.e., given $M\in\mathbb N$, we define
\[
(Q_{M}(f,h)^\top)(\theta)=\int_0^{2\pi}Q_{\mathrm{func}}^M(\theta,\varphi)\begin{pmatrix}
    f(\varphi)\\
    h(\varphi)
\end{pmatrix}\,d\varphi
\]
where
\[
Q_{\mathrm{func}}^M(\theta,\varphi)
=
\left(\sum_{|m|=0}^M A_m \mathrm e^{\mathrm im(\theta-\varphi-\pi/2)},\sum_{|m|=0}^M B_m \mathrm e^{\mathrm im(\theta-\varphi-\pi/2)}\right),
\]
and $A_m$ and $B_m$ refer to \eqref{first_coeff} and $\eqref{second_coeff}$, respectively.
\begin{Theorem}\label{convergence_THM}
Let
$\mathcal Q:H^{3/2}(0,2\pi)\times H^{1/2}(0,2\pi)
\rightarrow L^2(\mathbb S^1)$ be the far-field transform defined in \eqref{FF_transform}, and let
$\mathcal Q_M$ be the Fourier truncation obtained by retaining only the
modes $|m|\leq M$. Then
\[
\|\mathcal Q-\mathcal Q_M\|_{\mathcal L(H^{3/2}\times H^{1/2},L^2)}
=
\mathcal O\left(
\frac{1}{2^M}
\right),
\qquad M\rightarrow\infty .
\]
In particular,
$
\mathcal Q_M\rightarrow \mathcal Q$ in operator norm.
\end{Theorem}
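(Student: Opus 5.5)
The plan is to write $\mathcal Q-\mathcal Q_M$ explicitly in Fourier coordinates, reduce the operator-norm bound to a supremum over the tail coefficients $|A_m|,|B_m|$ with $|m|>M$, and then show that these coefficients decay super-exponentially in $|m|$, which in particular gives $\mathcal O(2^{-M})$.

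\textbf{Step 1: reduction to a coefficient bound.} From the series representation of $\mathcal Q$ derived above, for $(f,h)^\top\in H^{3/2}(0,2\pi)\times H^{1/2}(0,2\pi)$ we have, up to the fixed normalizing constant coming from $\frac{4}{\mathrm i}$ and the $\frac{1}{2\pi}$ in $f_m,h_m$,
\[
((\mathcal Q-\mathcal Q_M)(f,h)^\top)(\theta)=c\sum_{|m|>M}\big(A_m f_m+B_m h_m\big)\mathrm e^{\mathrm im(\theta-\pi/2)}.
\]
Parseval on $\mathbb S^1$ then gives
\[
\|(\mathcal Q-\mathcal Q_M)(f,h)^\top\|_{L^2}^2\le C\sup_{|m|>M}\big(|A_m|^2+|B_m|^2\big)\sum_{m}\big(|f_m|^2+|h_m|^2\big).
\]
The last sum is bounded by $\|f\|_{H^{3/2}}^2+\|h\|_{H^{1/2}}^2$; the Sobolev weights are not even needed. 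It therefore suffices to prove $|A_m|+|B_m|\le C\,2^{-|m|}$ for all large $|m|$. By the symmetry $H^{(1)}_{-m}=(-1)^mH^{(1)}_m$ it is enough to treat $m\to+\infty$. For finitely many $m$ the denominators are nonzero by unique solvability of \eqref{circle_bvp}, but only the tail matters here.

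\textbf{Step 2: the denominator, which is the main obstacle.} Write $a=\kappa R$ and $D_m=\mathrm iH_m^{(1)}(a){H_m^{(1)}}'(\mathrm ia)-{H_m^{(1)}}'(a)H_m^{(1)}(\mathrm ia)$. The naive large-order asymptotics
\[
H_m^{(1)}(z)\sim-\frac{\mathrm i}{\pi}(m-1)!\Big(\frac 2z\Big)^m,\qquad {H_m^{(1)}}'(z)\sim-\frac mz H_m^{(1)}(z)
\]
make the two leading terms of $D_m$ cancel, so a lower bound must be obtained more carefully. I will use the exact recurrence ${H_m^{(1)}}'(z)=H_{m-1}^{(1)}(z)-\frac mzH_m^{(1)}(z)$. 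Substituting it, the $\frac m{a}H_mH_m$ terms cancel identically and we are left with
\[
D_m=\mathrm iH_m^{(1)}(a)H_{m-1}^{(1)}(\mathrm ia)-H_{m-1}^{(1)}(a)H_m^{(1)}(\mathrm ia).
\]
Inserting the fixed-argument, large-order asymptotics (uniform in $m$, e.g. from the series for $Y_m$ and $K_m$), both products are of size $\frac{(m-1)!(m-2)!}{\pi^2}(2/a)^{2m-1}$. Their phases combine to a factor $\mathrm i^{-(m-1)}(\mathrm i-\mathrm i^{-1})=2\mathrm i\cdot\mathrm i^{-(m-1)}\neq0$. Hence
\[
|D_m|\ge c\,(m-1)!\,(m-2)!\,(2/a)^{2m-1}\quad\text{for }m\text{ large}.
\]
I expect making this non-cancellation rigorous, with explicit remainder control in the asymptotics, to be the only delicate point.

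\textbf{Step 3: the numerators and conclusion.} The numerators satisfy $|{H_m^{(1)}}'(\mathrm ia)|\le C\frac m a(m-1)!(2/a)^m$ and $|H_m^{(1)}(\mathrm ia)|\le C(m-1)!(2/a)^m$. Dividing by the lower bound for $|D_m|$ gives
\[
|A_m|+|B_m|\le C\,m\,\frac{(a/2)^{m}}{(m-2)!}.
\]
This decays faster than any geometric sequence, so in particular $|A_m|+|B_m|\le C2^{-m}$ for all large $m$. Combining with Step 1,
\[
\|\mathcal Q-\mathcal Q_M\|_{\mathcal L(H^{3/2}\times H^{1/2},L^2)}\le C\,2^{-M},
\]
and operator-norm convergence $\mathcal Q_M\to\mathcal Q$ follows immediately.
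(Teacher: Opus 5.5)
Your argument is correct, and its outer structure matches the paper's. You diagonalize $\mathcal Q-\mathcal Q_M$ in Fourier modes, use Parseval and Cauchy--Schwarz to reduce to $\sup_{|m|>M}(|A_m|^2+|B_m|^2)$ (using only the $L^2$ part of the Sobolev norms), and then show that the coefficients decay super-exponentially.

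The genuine difference is the denominator estimate, and there your route is the one that actually works. The paper's appendix lemma substitutes the separate large-order asymptotics of $H_m^{(1)}$ and $(H_m^{(1)})'$ at $\kappa R$ and $\mathrm i\kappa R$ directly into $\mathrm iH_m^{(1)}(\kappa R)(H_m^{(1)})'(\mathrm i\kappa R)-(H_m^{(1)})'(\kappa R)H_m^{(1)}(\mathrm i\kappa R)$. It reports a nonzero leading term proportional to $\mathrm i^m(1+\mathrm i)(\mathrm e\kappa R/2m)^{-2m}$. As you noticed, these leading terms cancel. Both $r\mapsto H_m^{(1)}(\kappa r)$ and $r\mapsto H_m^{(1)}(\mathrm i\kappa r)$ are multiples of $r^{-m}$ to leading order, so their Wronskian vanishes at that order. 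Even with the paper's own stated asymptotics, both products carry the same phase $\mathrm i^{m+2}$; the nonzero value comes from writing $\mathrm i^{m+1}$ for the first one. Separately, the correct relation is $(H_m^{(1)})'(t)\sim-\frac mt H_m^{(1)}(t)$, not $+\frac mt$.

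Your exact identity $D_m=\mathrm iH_m^{(1)}(a)H_{m-1}^{(1)}(\mathrm ia)-H_{m-1}^{(1)}(a)H_m^{(1)}(\mathrm ia)$ removes the cancellation identically. After that step the two products reinforce rather than compete. Relative errors of order $O(1/m)$ from the series for $Y_m$ and $K_m$ then suffice, so the step you flagged as delicate is in fact routine.

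Your resulting bound $|A_m|+|B_m|\lesssim m(\kappa R/2)^m/(m-2)!$ is essentially sharp. It exceeds the paper's claimed size $\sqrt m(\mathrm e\kappa R/2m)^m$ by a factor of order $m^2$. Both are super-exponentially small, so the stated $\mathcal O(2^{-M})$ rate (indeed a much faster one) holds either way. The difference is that your version actually proves it, whereas the paper's coefficient asymptotics rest on a miscomputed denominator.
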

\begin{proof}
Let $(f,h)^{\top}\in H^{3/2}(0,2\pi)\times H^{1/2}(0,2\pi)$
with Fourier expansions
\[
f(\theta)=\sum_{|m|=0}^\infty f_m\mathrm e^{\mathrm im\theta},
\qquad
h(\theta)=\sum_{|m|=0}^\infty h_m\mathrm e^{\mathrm im\theta}.
\]
By the Fourier representation of $\mathcal Q$, we obtain
\[
(\mathcal Q-\mathcal Q_M)(f,h)^\top
=
\sum_{|m|=M+1}^\infty
(A_mf_m+B_mh_m)\mathrm e^{\mathrm im(\theta-\pi/2)}.
\]
Hence, by Parseval's identity,
\[
\|(\mathcal Q-\mathcal Q_M)(f,h)^\top\|_{L^2}^2
\leq
C\sum_{|m|=M+1}^\infty
\left(
|A_m|^2|f_m|^2+
|B_m|^2|h_m|^2
\right).
\]
In order to estimate further, we now use the facts that
\[
(H_m^{(1)})'(t) = \frac{1}{2}\left(H_{m-1}^{(1)}(t)-H_{m+1}^{(1)}(t)\right)
\]
for all $m\in \mathbb Z$ as well as the asymptotic results (see for e.g. \cite{NIST:DLMF})
\[
-\mathrm iH_m^{(1)}(t) \sim \sqrt{\frac{2}{\pi m}}\left(\frac{\mathrm et}{2m}\right)^{-m}
\quad \text{and} \quad
-\mathrm i(H_m^{(1)})'(t) \sim \frac{m}{t}\sqrt{\frac{2}{\pi m}}\left(\frac{\mathrm et}{2m}\right)^{-m}\quad \text{as }m\to \infty
\]
for any fixed $t$. Applying these estimates for $t=\kappa R$ and $t=\mathrm i\kappa R$ in the expressions for $A_m$ and $B_m$ yields the result in Lemma \ref{lem:coeff_asymptotics}, i.e., that
\[
|A_m|^2+|B_m|^2
\leq
C\frac{\pi m}{2\kappa^2}
\left(\frac{\mathrm e\kappa R}{2m}\right)^{2m}.
\]
Consequently,
\[
\|(\mathcal Q-\mathcal Q_M)(f,h)^\top\|_{L^2}^2
\leq
C\sum_{|m|\geq M+1}
\frac{\pi m}{2\kappa^2}
\left(\frac{\mathrm e\kappa R}{2m}\right)^{2m}
\left(|f_m|^2+|h_m|^2\right).
\]
Writing
\[
\frac{\pi m}{2\kappa^2}
\left(\frac{\mathrm e\kappa R}{2m}\right)^{2m}
=
\frac{1}{4^m}
\frac{\pi m}{2\kappa^2}
\left(\frac{\mathrm e\kappa R}{m}\right)^{2m},
\]
we consider the sequence
\[
b_m:=
\frac{\pi m}{2\kappa^2}
\left(\frac{\mathrm e\kappa R}{m}\right)^{2m}.
\]
Its consecutive-term ratio satisfies
\[
\frac{b_{m+1}}{b_m}
=
\frac{m+1}{m}
\left(\frac{m}{m+1}\right)^{2m}
\left(\frac{\mathrm e\kappa R}{m+1}\right)^2
\longrightarrow 0
\qquad\text{as }m\to\infty.
\]
Hence, by the ratio test, the sequence $(b_m)_{m\geq 1}$ is bounded.
Thus, for $M$ sufficiently large and $|m|\geq M+1$,
\[
\frac{\pi m}{2\kappa^2}
\left(\frac{\mathrm e\kappa R}{2m}\right)^{2m}
\leq
\frac{C}{4^m}
\leq
\frac{C}{4^{M+1}}.
\]
It follows that
\[
\begin{aligned}
\|(\mathcal Q-\mathcal Q_M)(f,h)^\top\|_{L^2}^2
&\leq
\frac{C}{4^{M+1}}
\sum_{|m|\geq M+1}
\left(|f_m|^2+|h_m|^2\right)\\
&\leq
\frac{C}{4^{M+1}}
\|(f,h)\|_{H^{3/2}\times H^{1/2}}^2
\end{aligned}
\]
by the Sobolev embeddings
$H^{3/2}(0,2\pi)\hookrightarrow L^2(0,2\pi)$ and
$H^{1/2}(0,2\pi)\hookrightarrow L^2(0,2\pi)$. Therefore, taking the square root of both sides above, we can conclude that
\[
\|(\mathcal Q-\mathcal Q_M)(f,h)^\top\|_{L^2}
\leq
\frac{C}{2^{M+1}}
\|(f,h)\|_{H^{3/2}\times H^{1/2}}^2,
\] 
which proves the claim.
\end{proof}
In practice, the convergence rate in Theorem \ref{convergence_THM} means that we can approximate the operator $\mathcal{Q}$ by the truncated series $\mathcal{Q}_M$. Therefore, due to the fast convergence one does not need to keep many terms in the series. This will allow one to evaluate the approximate operator $\mathcal{Q}_M$ more efficiently when numerically solving the inverse problem.
\section{Numerical Experiments}\label{numerics_sect}
\subsection{The Method of Fundamental Solutions for the Scattering Problem}
In this section, we will provide numerical reconstructions of the clamped obstacle $D$ using the regularized imaging function $W^{\mathrm{Reg}}(z,\alpha;\delta)$ given by \eqref{reg_W}. We will see that the factorization method provides quality reconstructions of the clamped obstacle with augmented multistatic near-field data, i.e., the post-processing via the far-field transform leads to a stable indicator for recovering the scatterer. To this end, we first generate synthetic near-field data by numerically solving the direct scattering problem \eqref{eqnbcs}-\eqref{SRCs} using the method of fundamental solutions developed in \cite{karageorghislesnic2024}. Compared with classical boundary integral equation (BIE) methods for scattering problems, the method of fundamental solutions (MFS) offers a meshless and computationally efficient alternative that is particularly well-suited for repeated forward evaluations in inverse problem settings. In contrast to BIE formulations, which require boundary discretization and the numerical treatment of singular kernels, MFS avoids surface meshing by representing the solution using fundamental solutions with sources placed outside the physical domain. This leads to a simpler implementation and reduced geometric preprocessing, while still achieving accurate approximations for smooth boundary geometries.  

We take the fourth MFS approach specified in \cite{karageorghislesnic2024}. That is, we approximate the solution of \eqref{eqnbcs}-\eqref{SRCs} by a linear combination of non-singular fundamental solutions
\begin{equation}\label{MFS_4}
    u_{N}^s(x)=\sum_{j=1}^N c_j\Phi_{\mathrm H}(x,y_j)+\sum_{j=1}^Nd_j\Phi_{\mathrm M}(x,y_j),\quad x\in (\mathbb R^2\setminus\overline D)\cup \partial D,
\end{equation}
where $(y_j)_{j=\overline{1,N}}$ are the source points located inside $D$ and $(c_j)_{j=\overline{1,2N}}$ are the unknown complex coefficients to be determined by imposing the boundary conditions \eqref{eqnbcs}. In \eqref{MFS_4}, the pair $(\Phi_{\mathrm H},\Phi_{\mathrm M})$ are the fundamental solutions for the propagative and evanescent components of the fundamental solution $\mathbb G(\cdot,y)$, respectively. That is, 
\[
\Phi_{\mathrm H}(x,y)=\frac{\mathrm i}{4}H_0^{(1)}(\kappa|x-y|)\quad \text{and}\quad \Phi_{\mathrm M}(x,y)=\frac{\mathrm i}{4}H_{0}^{(1)}(\mathrm i\kappa|x-y|)\quad\text{for } x\neq y
\]
satisfy
\[
\Delta \Phi_{\mathrm H}(x,y)+\kappa^2\Phi_{\mathrm H}(x,y)=-\delta(x-y)\quad \text{and}\quad \Delta \Phi_{\mathrm M}(x,y)-\kappa^2\Phi_{\mathrm M}(x,y)=-\delta(x-y)\quad \text{in }\mathbb R^2.
\]
Here, $H_0^{(1)}$ is the Hankel function of the first kind of order zero.
Assume that $D \subset \mathbb{R}^2$ is a smooth, star-shaped domain with respect to the origin. In polar coordinates, its boundary $\partial D$ can be parameterized as
\begin{equation}
\label{eq:param_boundary}
x_1 = r(\theta)\cos\theta, 
\qquad 
x_2 = r(\theta)\sin\theta,
\qquad \theta \in [0,2\pi),
\end{equation}
where $r$ is a smooth $2\pi$-periodic function. We discretize the boundary $\partial D$ using $M$ equispaced collocation points defined by
\begin{equation}
\label{eq:collocation}
x_m = r(\tilde{\theta}_m)\big(\cos\tilde{\theta}_m, \sin\tilde{\theta}_m\big),
\qquad
\tilde{\theta}_m = \frac{2\pi(m-1)}{M},
\quad m=1,\dots,M.
\end{equation}
We also place $N$ source points on a ``pseudo-boundary'' $\partial D^\prime$, defined by a radial contraction of $\partial D$:
\begin{equation}
\label{eq:mfs1}
y_\ell = \eta_1\, r(\theta_\ell)\big(\cos\theta_\ell, \sin\theta_\ell\big),
\qquad
\theta_\ell = \frac{2\pi(\ell-1)}{N}, \quad \ell=1,\dots,N,
\end{equation}
and another $N$ sources on a pseudo-boundary $\partial D^{\prime\prime}$ given by 
\begin{equation}
    \label{eq:mfs2}
    y_{N+\ell}=\eta_2 r(\theta_\ell)(\cos{\theta_\ell},\sin{\theta_\ell}),\quad \ell=1,\dots,N,
\end{equation}
where $\eta_1,\eta_2 \in (0,1)$ are contraction parameters with $\eta_1\neq \eta_2$ and $\theta_\ell=2\pi (\ell-1)/N$. Additionally, for our MFS system we make the artificial assumption that $\kappa^2$ is not an interior Dirichlet eigenvalue of $-\Delta$ in $D^\prime$.
The imposition of the clamped boundary conditions yields a $2M\times 2N$ linear system of the form
\begin{equation}
\left(
\begin{array}{c|c}
A_{11} & A_{12} \\
\hline
A_{21} & A_{22}
\end{array}
\right)\begin{pmatrix}
\mathbf{c} \\
\hline
\mathbf{d}
\end{pmatrix}=\begin{pmatrix}
b_1 \\
\hline
b_2
\end{pmatrix}.
\end{equation}
The matrices $A_{11}, A_{12}, A_{21}, A_{22} \in \mathbb{R}^{M \times N}$ are defined by
\begin{align*}
A_{11} =  \, \Phi_{\mathrm{H}}(x_i, y_j), 
\quad\text{and}\quad
A_{12} =  \, \Phi_{\mathrm{M}}(x_i, y_j),
\end{align*}
\begin{align*}
A_{21} = \partial_{\nu_{x}} \, \Phi_{\mathrm{H}}(x_i, y_j), 
\quad \text{and}\quad
A_{22} = \partial_{\nu_{x}} \, \Phi_{\mathrm{M}}(x_i, y_j),
\end{align*}
for $i=1,\cdots,M$ and $j=1,\cdots, N$. Moreover, the vectors $b_1,b_2\in \mathbb R^{M}$ are defined by
\[
b_1 = -\mathbb{G}(x_i, y), 
\qquad
b_2 = -\partial_{\nu_{x}} \mathbb{G}(x_i, y),
\qquad i = 1,\dots,M.
\]
Having determined the unknown coefficient vectors $\mathbf c=(c_j)_{j=\overline{1,N}}\in \mathbb C^{N}$ and $\mathbf d=(d_j)_{j=\overline{1,N}}\in \mathbb C^{N}$, the approximation \eqref{MFS_4} may be calculated anywhere in $(\mathbb R^2\setminus\overline D)\cup \partial D$.
\subsection{Numerical Implementation}
Now we are ready to provide some numerical examples that have been implemented in \textsf{MATLAB} 2024a.
We first discuss how the synthetic near-field matrix is computed. We will take
\[
\mathcal C=\partial B_R\quad \text{with }R=2 \text{ in our numerical experiments }
\]
for all our examples. 

The sources and receivers are located at the points 
\[
X_k=Y_k=R(\cos{\varphi_k},\sin{\varphi_k}),
\qquad k=1,\dots,40,
\]
where the angles $\varphi_k\in[0,2\pi)$ are equally spaced. The measured near-field data consist of the scattered field and its normal derivative,
\[
u_N^s(x,y),\quad \partial_{\nu_x}u_N^s(x,y),\quad
\widetilde{u}_N^s(x,y),\quad
\partial_{\nu_x}\widetilde{u}_N^s(x,y),
\qquad (x,y)\in\mathcal{C}\times\mathcal{C}.
\]
The multistatic near-field matrix $\mathbf{N}\in\mathbb{C}^{80\times80}$ is then given by
\begin{equation}\label{NFmatrix}
\mathbf{N}
=
\begin{pmatrix}
\big(u_N^s(X_k,Y_\ell)\big)_{k,\ell=1}^{40}
&
\big(\widetilde{u}_N^s(X_k,Y_\ell)\big)_{k,\ell=1}^{40}
\\[2mm]
\big(\partial_{\nu_x}u_N^s(X_k,Y_\ell)\big)_{k,\ell=1}^{40}
&
\big(\partial_{\nu_x}\widetilde{u}_N^s(X_k,Y_\ell)\big)_{k,\ell=1}^{40}
\end{pmatrix}.
\end{equation}
We can additionally introduce noisy measurements by defining
\begin{equation}
\mathbf{N}^{\delta}
=
\begin{pmatrix}\label{noisy_NFmatrix}
\big((u_N^s)^\delta(X_k,Y_\ell)\big)_{k,\ell=1}^{40}
&
\big((\widetilde{u}_N^s)^\delta(X_k,Y_\ell)\big)_{k,\ell=1}^{40}
\\[2mm]
\big((\partial_{\nu_x}u_N^s)^\delta(X_k,Y_\ell)\big)_{k,\ell=1}^{40}
&
\big((\partial_{\nu_x}\widetilde{u}_N^s)^\delta(X_k,Y_\ell)\big)_{k,\ell=1}^{40}
\end{pmatrix},
\end{equation}
where $0<\delta<1$ denotes the noise level. The perturbed measurements are defined by
\begin{align}
(u_N^s)^\delta(X_k,Y_\ell)
&=(1+\delta E^{(1)}_{k\ell})u_N^s(X_k,Y_\ell),
&
(\widetilde{u}_N^s)^\delta(X_k,Y_\ell)
&=(1+\delta \widetilde E^{(1)}_{k\ell})
\widetilde{u}_N^s(X_k,Y_\ell),
\nonumber\\
(\partial_{\nu_x}u_N^s)^\delta(X_k,Y_\ell)
&=(1+\delta E^{(2)}_{k\ell})
\partial_{\nu_x}u_N^s(X_k,Y_\ell),
&
(\partial_{\nu_x}\widetilde{u}_N^s)^\delta(X_k,Y_\ell)
&=(1+\delta \widetilde E^{(2)}_{k\ell})
\partial_{\nu_x}\widetilde{u}_N^s(X_k,Y_\ell),
\label{noise_data}
\end{align}
for $k,\ell=1,\dots,40$. Here,
\[
E^{(1)},E^{(2)},\widetilde E^{(1)},\widetilde E^{(2)}
\in\mathbb R^{40\times40}
\]
are random noise matrices satisfying
\[
\|E^{(i)}\|_2=\|\widetilde E^{(i)}\|_2=1,
\qquad i=1,2.
\]
In order to apply the reconstruction method in \eqref{range_test_W_reg}, we need to discretize the operators $\mathcal Q$ and $\mathcal R$. 
Note that we define the truncated series approximations of the kernel functions by
\[
Q_{\mathrm{func}}(\theta,\varphi)
=
\left(
\sum_{|m|=0}^{10}A_m \mathrm e^{\mathrm im(\theta-\varphi-\pi/2)},
\;
\sum_{|m|=0}^{10}B_m \mathrm e^{\mathrm im(\theta-\varphi-\pi/2)}
\right)
\quad\text{and}\quad
R_{\mathrm{func}}(\theta,\varphi)
=
\frac{1}{2\pi}
\sum_{|m|=0}^{10}
\mathrm e^{\mathrm im(\theta-\varphi+\pi)},
\]
where $A_m$ and $B_m$ are given by \eqref{first_coeff} and \eqref{second_coeff}, respectively. The corresponding discrete matrices
$\mathbf Q,\mathbf R\in\mathbb{C}^{80\times80}$
are obtained by evaluating these kernel functions at the measurement directions
associated with the compound data space. More precisely,
\[
\mathbf Q=
\begin{bmatrix}
\bigl[Q_{\mathrm{func}}^{(1)}(\varphi_k,\varphi_\ell)\bigr]_{k,\ell=1}^{40} & 0\\[2mm]
0 &\bigl[Q_{\mathrm{func}}^{(2)}(\varphi_k,\varphi_\ell)\bigr]_{k,\ell=1}^{40}
\end{bmatrix}
\quad\text{and}\quad
\mathbf R=
\bigl[R_{\mathrm{func}}(\theta_k,\theta_\ell)\bigr]_{k,\ell=1}^{80}.
\]
 With this, we can compute the imaging function $W^{\mathrm{Reg}}(z)$ given in \eqref{reg_W} by first computing
\[
4\kappa^4\mathbf Q\mathbf N\mathbf Q^\top \mathbf R\quad \text{and}\quad \mathbf b_z=[\mathrm e^{-\mathrm i\kappa z\cdot \hat x_1},\cdots, \mathrm e^{-\mathrm i\kappa z\cdot \hat x_{80}}]^\top
\]
with $\hat x_k=(\cos \theta_k,\sin \theta_k)$ with $k=1,\cdots, 80$.
We will provide numerical examples of the reconstructions, where $W^{\mathrm{Reg}}(z)$ is computed via
\begin{equation}
\label{reg2_W}
W^{\mathrm{Reg}}(z;\alpha,\delta)
=
\left[
\sum_{j=1}^{80}
\frac{
\phi^2(|\lambda_j^\delta|,\alpha)
}{
|\lambda_j^\delta|
}
\left|
\mathbf{u}_j\cdot \mathbf{b}_z
\right|^2
\right]^{-1}
\end{equation}
using the built-in $\mathbf{eig}$ solver in \textsf{MATLAB} to compute $\lambda_j$ and $\mathbf u_j$, i.e., the eigenvalues and eigenvectors of $\mathbf Q\mathbf N\mathbf Q^\top\mathbf R$. For our examples, we will use the Tikhonov and spectral-cutoff filter functions given by \eqref{tikn} and \eqref{cutoff}, respectively.

We can now recover example scatterers with the synthetic near-field data. We use three obstacle shapes in our experiments: a disk with a fixed radius $=0.75$, a five-pointed star-shaped obstacle, and a peanut-shaped obstacle. The boundary parametrizations of all three clamped obstacles are shown in Table \ref{tab:test_obstacles}. In Figure \ref{fig:domains} we provide a visualization of the scatterers defined in Table \ref{tab:test_obstacles}.  In these examples, the sampling region is $[-2,2]^2$ and we select $120\times 120$ equally spaced grid points within this region. We present contour plots of the imaging function for both scatterers at different noise levels to assess the stability of our method. In all figures, the white dotted line represents the boundary $\partial D$ of the obstacle. The imaging function is normalized in the sampling region to have a maximum value of one.
\begin{table}[h]
\centering
\caption{The boundary parametrizations of $\partial D=\gamma(t)$ for
$t\in[0,2\pi)$.}
\label{tab:test_obstacles}
\begin{tabular}{ll}
\hline
Obstacle & Boundary parameterization $\gamma(t)$ \\
\hline
Disk &
$\gamma(t)=0.75(\cos t,\sin t)^\top$ \\[0.8ex]

Five-pointed star &
$\gamma(t)=\bigl(1+0.3\cos(5t)\bigr)(\cos t,\sin t)^\top$ \\[0.8ex]

Peanut &
$\gamma(t)=1.2\sqrt{\sin^2 t+0.1\cos^2 t}\,(\cos t,\sin t)^\top$ \\
\hline
\end{tabular}
\end{table}

\begin{figure}[ht]
\centering
\begin{subfigure}[t]{0.22\textwidth}
    \centering
    \includegraphics[width=\linewidth]{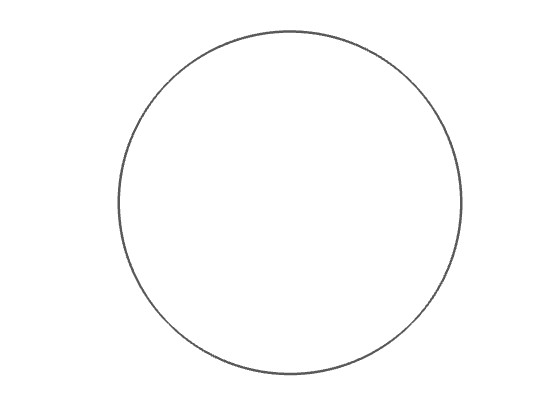}
    \caption{Disk}
\end{subfigure}
\begin{subfigure}[t]{0.22\textwidth}
    \centering
    \includegraphics[width=\linewidth]{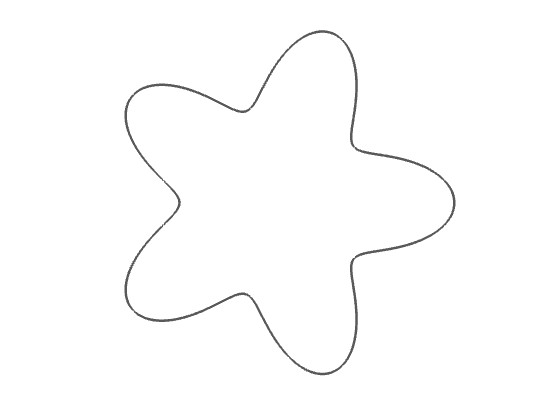}
    \caption{Five-pointed star}
\end{subfigure}
\begin{subfigure}[t]{0.22\textwidth}
    \centering
    \includegraphics[width=\linewidth]{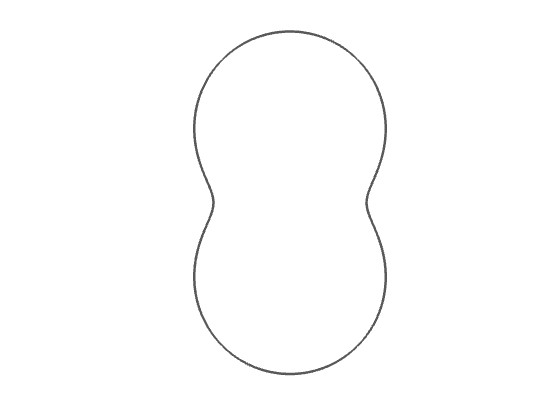}
    \caption{Peanut}
\end{subfigure}
\caption{Reference scatterer geometries defined in Table \ref{tab:test_obstacles}.}
\label{fig:domains}
\end{figure}
\subsection{Example 1: Resolution}
The main task of this subsection is to illustrate the performance of the imaging function at different frequencies. In particular, we choose a trajectory of the wavenumbers $\kappa=2, 4,$ and $8$ in our examples. For the factorization method, we employ Tikhonov regularization with the fixed parameter $\alpha=10^{-6}$, chosen empirically based on its effectiveness in recovering the original shape. A more systematic choice of the regularization parameter is discussed later.

We now present numerical examples of the disk-shaped, five-pointed star-shaped and peanut-shaped obstacles using our imaging function. Figure~\ref{fig:disk_resolution} illustrates reconstructions of the disk-shaped
scatterer with increasing wavenumber, using a fixed radius $=0.75$ and noise-free data. The reconstructions are similar for
$\kappa=2$ and $\kappa=4$, while the improvement in resolution is more
noticeable for $\kappa=8$. 

Having demonstrated the effectiveness of the proposed method for the disk-shaped scatterer, we next consider more geometrically complex scatterers and investigate its numerical performance under resolution. Figure~\ref{fig:star_resolution} illustrates the effect of increasing the wavenumber on the reconstruction of the five-pointed star. In all cases, the general size and overall shape of the star-shaped obstacle is recovered. At higher frequency, we observe more detailed reconstructions of the obstacle. At the same time, the indicator exhibits lower values over a larger portion of the imaging domain at higher frequencies, producing darker regions in the visualization. Similarly, Figure ~\ref{fig:peanut_resolution} demonstrates the effect of increasing the wavenumber of the peanut-shaped obstacle. Though the change in wavenumber slightly improves the resolution of the reconstructions, there is a slightly noticeable sensitivity of the reconstruction when the wavenumber is increased for both star-shaped and peanut-shaped reconstructions.

\begin{figure}[ht]
\centering

\begin{subfigure}[t]{0.33\textwidth}
    \centering
    \includegraphics[width=\linewidth]{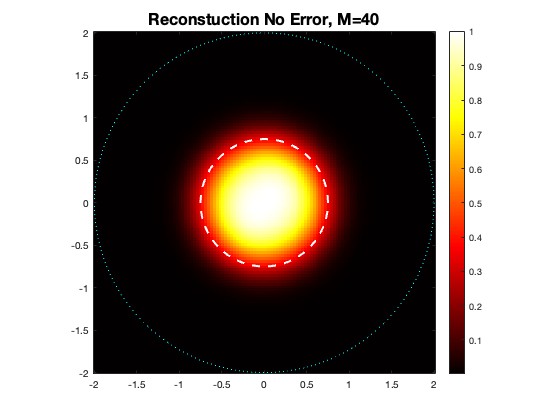}
    \caption{$\kappa = 2$}
\end{subfigure}\hfill
\begin{subfigure}[t]{0.33\textwidth}
    \centering
    \includegraphics[width=\linewidth]{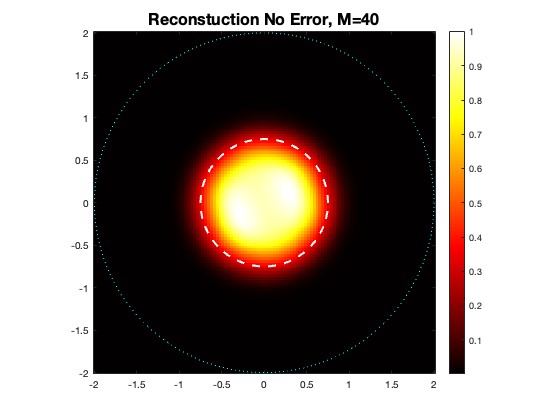}
    \caption{$\kappa = 4$}
\end{subfigure}\hfill
\begin{subfigure}[t]{0.33\textwidth}
    \centering
    \includegraphics[width=\linewidth]{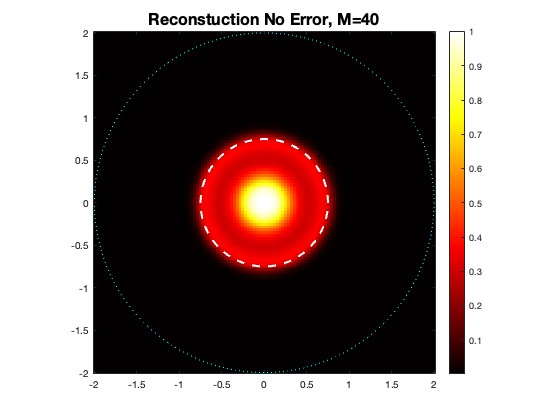}
    \caption{$\kappa = 8$}
\end{subfigure}

\caption{Reconstruction of the disk-shaped scatterer with a fixed radius $=0.75$ by the imaging function $W^{\mathrm{Reg}}(z)$ for increasing wavenumbers with no noise. Here $M=40$ refers to the number of sources and receivers.}
\label{fig:disk_resolution}
\end{figure}

\begin{figure}[ht]
\centering

\begin{subfigure}[t]{0.33\textwidth}
    \centering
    \includegraphics[width=\linewidth]{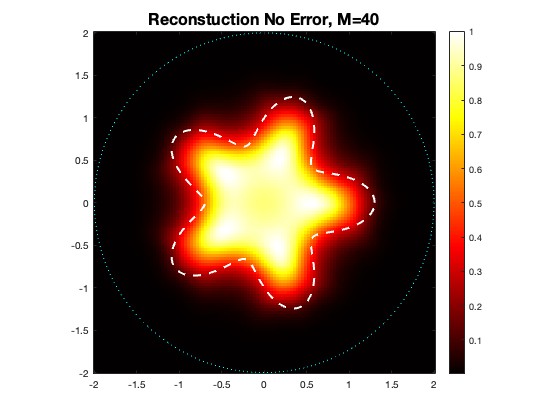}
    \caption{$\kappa = 2$}
\end{subfigure}\hfill
\begin{subfigure}[t]{0.33\textwidth}
    \centering
    \includegraphics[width=\linewidth]{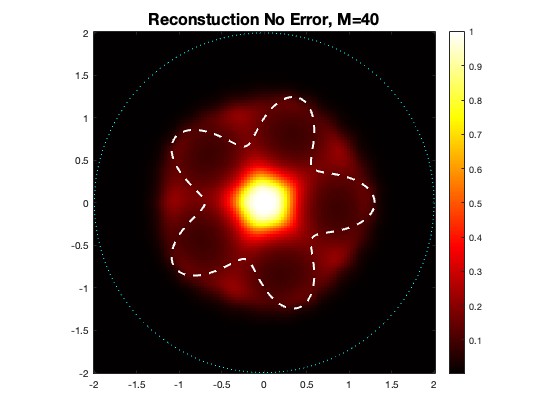}
    \caption{$\kappa = 4$}
\end{subfigure}\hfill
\begin{subfigure}[t]{0.33\textwidth}
    \centering
    \includegraphics[width=\linewidth]{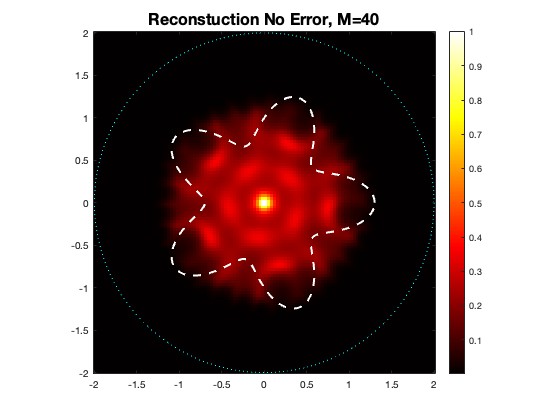}
    \caption{$\kappa = 8$}
\end{subfigure}

\caption{Reconstruction of the five-pointed star-shaped scatterer by the imaging function $W^{\mathrm{Reg} }(z)$ for increasing wavenumbers with no noise. Here $M=40$ refers to the number of sources and receivers.}
\label{fig:star_resolution}
\end{figure}

\begin{figure}[ht]
\centering

\begin{subfigure}[t]{0.33\textwidth}
    \centering
    \includegraphics[width=\linewidth]{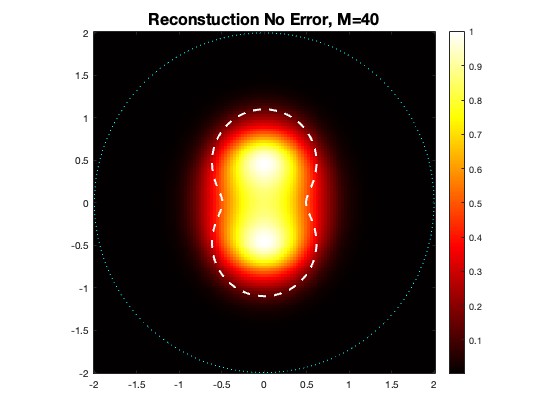}
    \caption{$\kappa = 2$}
\end{subfigure}\hfill
\begin{subfigure}[t]{0.33\textwidth}
    \centering
    \includegraphics[width=\linewidth]{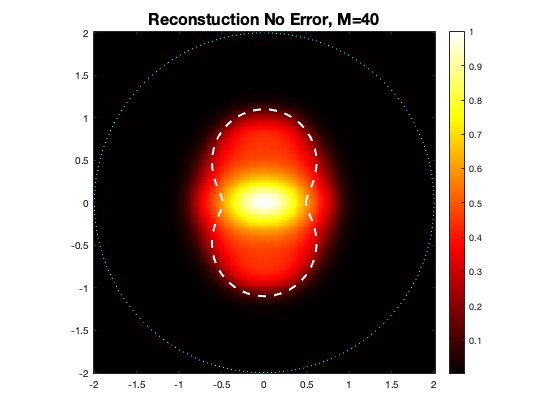}
    \caption{$\kappa = 4$}
\end{subfigure}\hfill
\begin{subfigure}[t]{0.33\textwidth}
    \centering
    \includegraphics[width=\linewidth]{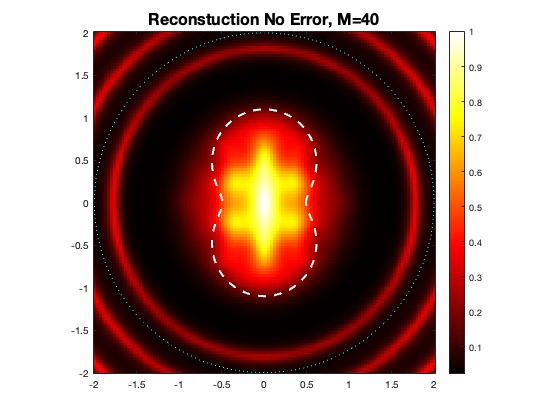}
    \caption{$\kappa = 8$}
\end{subfigure}

\caption{Reconstruction of the peanut-shaped scatterer by the imaging function $W^{\mathrm{Reg}}(z)$ for increasing wavenumbers with no noise. Here $M=40$ refers to the number of sources and receivers.}
\label{fig:peanut_resolution}
\end{figure}

\subsection{Example 2: Influence of noise in the data}
To examine the stability of the proposed method, we consider increasing levels of noise in the near-field data at a fixed heuristic choice for a  regularization parameter of $\alpha=10^{-6}$ with Tikhonov regularization. Figure~\ref{fig:disk_noise} shows reconstructions of the disk-shaped obstacle for increasing levels of noise in the near-field matrix, using the fixed regularization parameter $\alpha=10^{-6}$ at the fixed wavenumber $\kappa=2$. The reconstructions remain relatively stable as the noise level increases, with only modest degradation visible at the highest noise level. We see similar results for the five-pointed star and peanut shaped obstacles.  In Figure~\ref{fig:star5_noise}, we recover the five-pointed star-shaped obstacle at the fixed wavenumber $\kappa=2$. The reconstructions remain largely unaffected under $2\%$ and $5\%$ noise, while a  degradation in the imaging quality becomes noticeable at the $20\%$ noise level. In particular, the boundary profile becomes less sharp and the shape reconstruction degrades slightly compared to the exact boundary. Nevertheless, the overall shape and location of the five-pointed star-shaped scatterer are well preserved, demonstrating the robustness of the method with respect to moderate measurement perturbations. Figure~\ref{fig:peanut_noise} shows a similar trajectory for peanut-shaped obstacle reconstructions at increasing noise levels. The reconstructions remain robust under noise overall, with noticeable degradation for the case of $20\%$ noise.

\begin{figure}[h]
\centering
\begin{subfigure}[t]{0.33\textwidth}
    \centering
    \includegraphics[width=\linewidth]{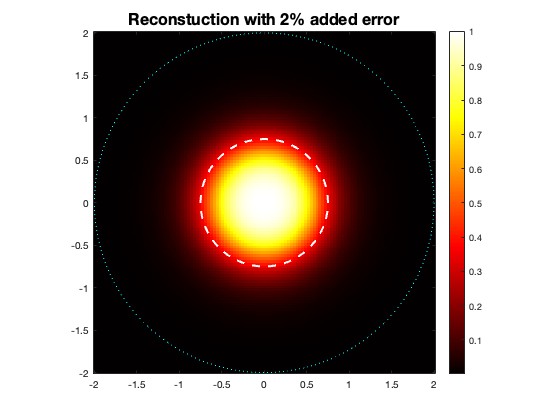}
    \caption{$2\%$ noise}
\end{subfigure}\hfill
\begin{subfigure}[t]{0.33\textwidth}
    \centering
    \includegraphics[width=\linewidth]{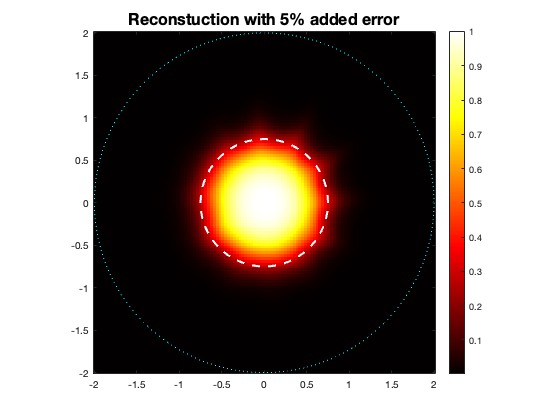}
    \caption{$5\%$ noise}
\end{subfigure}\hfill
\begin{subfigure}[t]{0.33\textwidth}
    \centering
    \includegraphics[width=\linewidth]{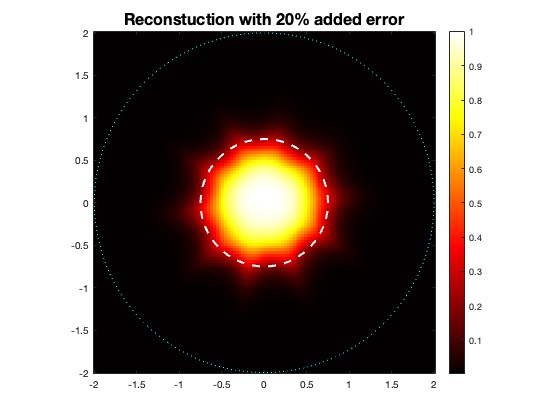}
    \caption{$20\%$ noise}
\end{subfigure}
\caption{Reconstruction of the disk-shaped scatterer at a fixed wavenumber $\kappa=2$ and fixed radius $=0.75$ using the imaging function
$W^{\mathrm{Reg}}(z)$ under increasing levels of noise in the near-field data. We used $40$ sources and receivers.}
\label{fig:disk_noise}
\end{figure}
\begin{figure}[h]
\centering

\begin{subfigure}[t]{0.33\textwidth}
    \centering
    \includegraphics[width=\linewidth]{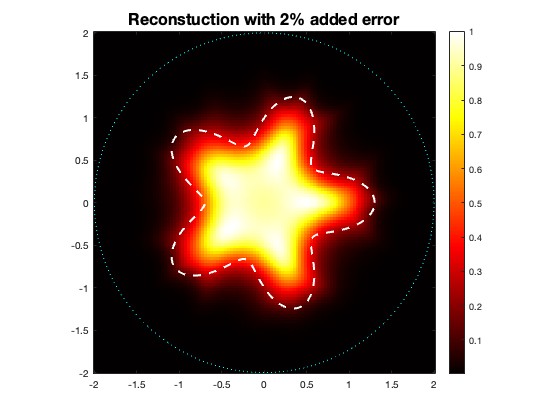}
    \caption{$2\%$ noise}
\end{subfigure}\hfill
\begin{subfigure}[t]{0.33\textwidth}
    \centering
    \includegraphics[width=\linewidth]{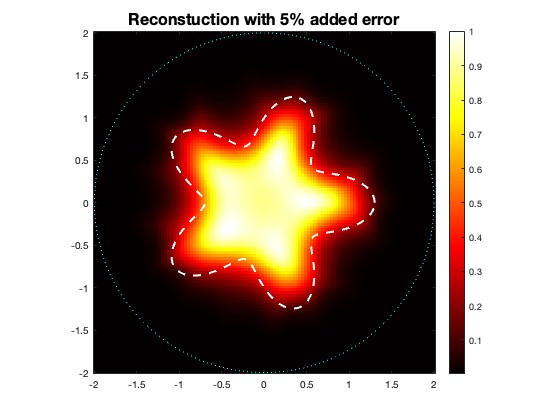}
    \caption{$5\%$ noise}
\end{subfigure}\hfill
\begin{subfigure}[t]{0.33\textwidth}
    \centering
    \includegraphics[width=\linewidth]{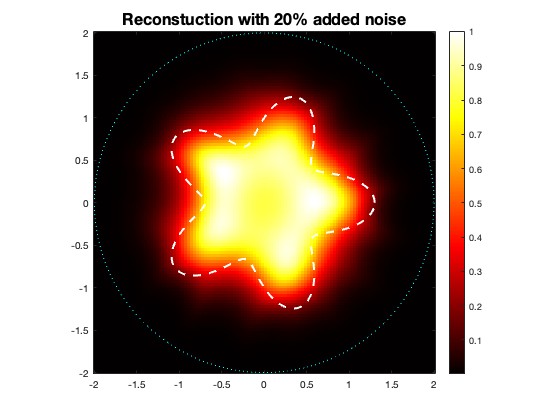}
    \caption{$20\%$ noise}
\end{subfigure}

\caption{Reconstruction of the five-pointed star-shaped scatterer at a fixed wavenumber $\kappa=2$ using the imaging function
$W^{\mathrm{Reg}}(z)$ under increasing levels of noise in the near-field data.
The reconstruction remains robust under noise overall, 
with some degradation of the imaging quality observed for $5\%$ and $20\%$ noise.
We used $40$ sources and receivers.}
\label{fig:star5_noise}
\end{figure}

\begin{figure}[ht]
\centering

\begin{subfigure}[t]{0.33\textwidth}
    \centering
    \includegraphics[width=\linewidth]{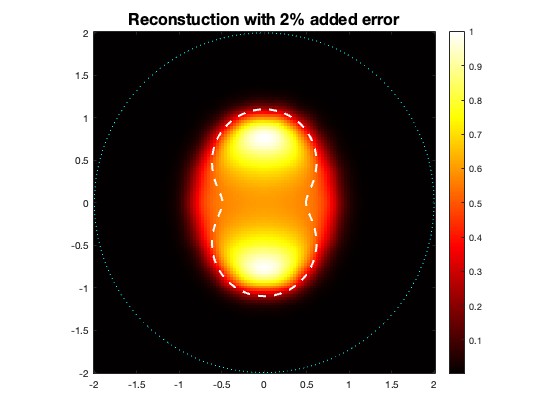}
    \caption{$2\%$ noise}
\end{subfigure}\hfill
\begin{subfigure}[t]{0.33\textwidth}
    \centering
    \includegraphics[width=\linewidth]{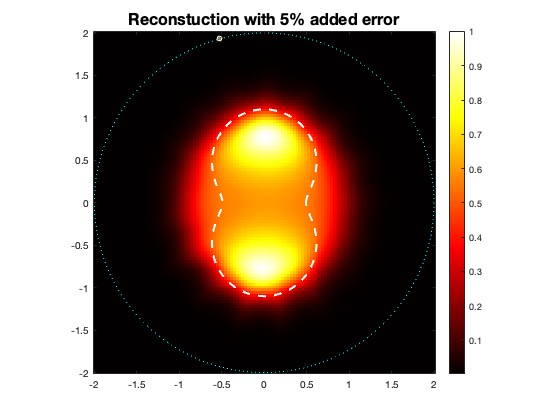}
    \caption{$5\%$ noise}
\end{subfigure}\hfill
\begin{subfigure}[t]{0.33\textwidth}
    \centering
    \includegraphics[width=\linewidth]{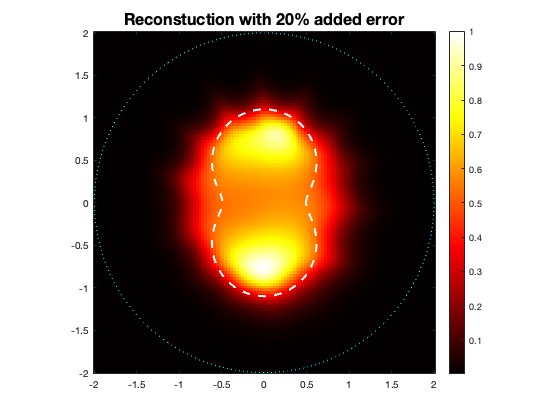}
    \caption{$20\%$ noise}
\end{subfigure}

\caption{Reconstruction of the peanut-shaped scatterer at a fixed wavenumber $\kappa=\pi$ by the imaging function
$W^{\mathrm{Reg}}(z)$ under increasing noise levels in the near-field data.
Here we used $40$ sources and receivers.}
\label{fig:peanut_noise}
\end{figure}

\begin{figure}[h]
\centering

\begin{subfigure}[t]{0.33\textwidth}
    \centering
    \includegraphics[width=\linewidth]{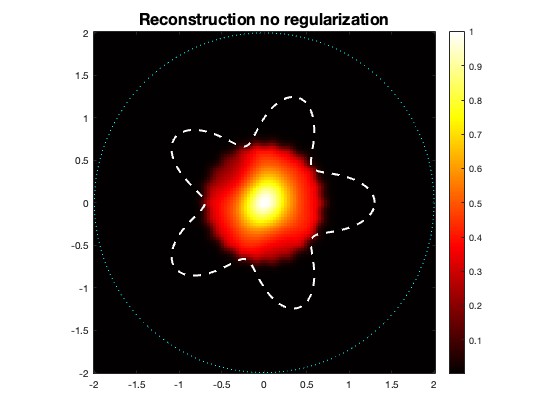}
    \caption{No regularization}
\end{subfigure}\hfill
\begin{subfigure}[t]{0.33\textwidth}
    \centering
    \includegraphics[width=\linewidth]{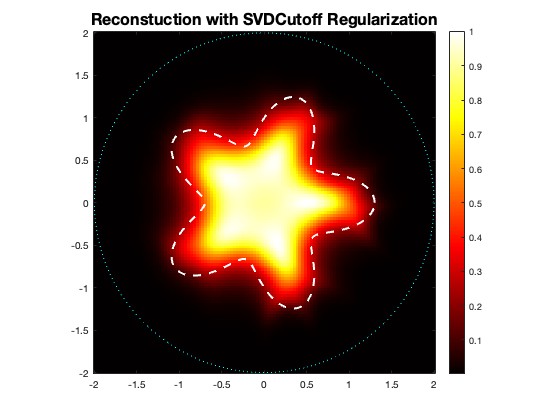}
    \caption{Spectral cutoff}
\end{subfigure}\hfill
\begin{subfigure}[t]{0.33\textwidth}
    \centering
    \includegraphics[width=\linewidth]{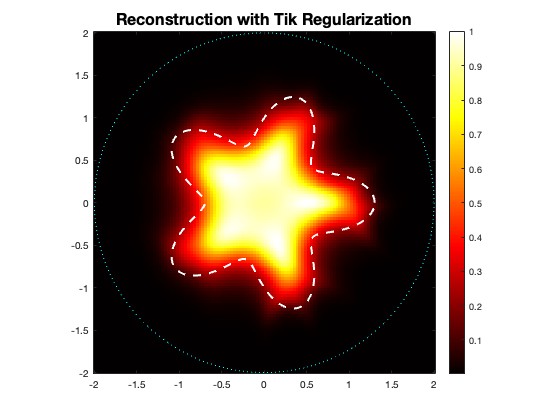}
    \caption{Tikhonov regularization}
\end{subfigure}

\caption{Reconstruction of the five-pointed star-shaped obstacle by the imaging function
$W^{\mathrm{Reg}}(z)$ using different regularization strategies for the near-field
operator. Here $\kappa=2$ and we used $40$ sources and receivers. We also added $2\%$ random relative noise to the near-field data.
The SVD cutoff and Tikhonov regularization results use the regularization parameter
$\alpha=10^{-6}$.}
\label{fig:star_regularization}
\end{figure}

\begin{figure}[ht]
\centering

\begin{subfigure}[t]{0.33\textwidth}
    \centering
    \includegraphics[width=\linewidth]{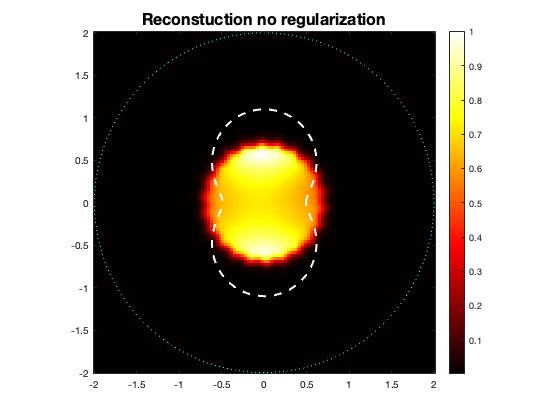}
    \caption{No regularization}
\end{subfigure}\hfill
\begin{subfigure}[t]{0.33\textwidth}
    \centering
    \includegraphics[width=\linewidth]{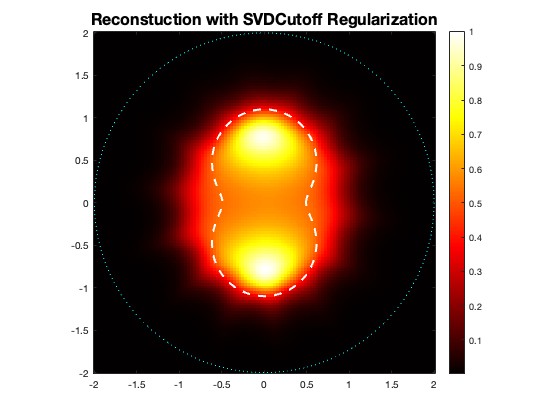}
    \caption{Spectral cutoff}
\end{subfigure}\hfill
\begin{subfigure}[t]{0.33\textwidth}
    \centering
    \includegraphics[width=\linewidth]{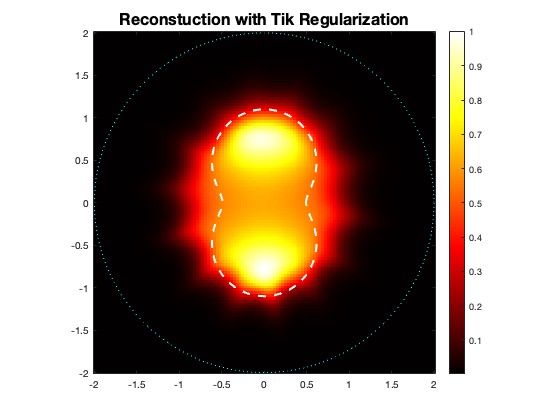}
    \caption{Tikhonov regularization}
\end{subfigure}

\caption{Reconstruction of the peanut-shaped obstacle using different regularization strategies. Here $\kappa=\pi$ and we used $40$ sources and receivers. We also added $15\%$ random relative noise to the near-field data. The SVD cutoff and Tikhonov regularization results use the regularization parameter
$\alpha=10^{-6}$.}
\label{fig:peanut_reg_noise15}
\end{figure}

\subsection{Example 3: Regularization and parameter selection}
For our next set of examples, we wish to test the effect of the reconstruction method with and without regularization. To investigate the effect of regularization, we compare the reconstruction of the five-pointed star-shaped obstacle using the imaging function $W^{\mathrm{Reg}}(z)$ with different regularization strategies in Figure~\ref{fig:star_regularization}. The near-field data are perturbed by $2\%$ relative random noise, with fixed wavenumber of $\kappa=2$ for the five-pointed star and $\kappa=\pi$ for the peanut-shaped obstacle and $40$ sources and receivers. In the case without regularization, we set $\alpha=0$ in the imaging function
$W^{\mathrm{Reg}}(z)$. This result is compared with the reconstructions obtained
using spectral cutoff and Tikhonov regularization with the ad-hoc parameter
$\alpha=10^{-6}$. As expected, the unregularized case is highly unstable under noisy data, producing a distorted imaging function where the shape of the obstacle is no longer identifiable. In contrast, both regularization schemes yield stable and nearly identical reconstructions. A similar behavior is observed for the peanut-shaped obstacle reconstructions, as shown in Figure~\ref{fig:peanut_reg_noise15}. Here, the near-field data are contaminated by $15\%$ relative random noise with fixed wavenumber $\kappa=\pi$ and $40$ sources and receivers. As with the star-shaped obstacle, the unregularized reconstruction of the peanut is significantly affected by the noise, producing a distorted imaging function. In contrast, both the spectral cutoff and Tikhonov regularization schemes with $\alpha=10^{-6}$ provide stable and comparable reconstructions, successfully preserving the main geometric features of the obstacle. There is additionally a more noticeable effect of the noise on the reconstructions.

In our previous examples, we numerically tested the stability of the reconstructions under noisy data and considered two regularization techniques: spectral cutoff and Tikhonov regularization. Nevertheless, the regularization parameter was selected in an ad-hoc manner by fixing $\alpha=10^{-6}$. We now investigate a parameter choice rule motivated by regularization theory and assess its effectiveness in comparison with the fixed empirical choice. In order to provide an analytical strategy for selecting the regularization parameter, a discrepancy principle was introduced in \cite{harris2023rfm}, assuming that the noise level $\delta$ is known. For the Tikhonov regularization, it is given that 
\begin{align}\label{optimal_alpha}
    \alpha_{\mathrm{Tik}}(\delta)&=\frac{1}{4}\delta^{\left(\frac{1}{4}-p\right)}\quad \text{for any}\quad p\in (0,1/4).
\end{align}
Since $\delta$ is not known explicitly in many applications, we approximate it via 
\[
\text{Error}=||\mathbf N-\mathbf N^\top||_2/||\mathbf N||_2
\]
as in \cite{HarrisKleefeld2026}, since we expect $N$ to be symmetric by Lemma 2.4 in \cite{BourgeoisHazard2020}.
We note that this is a good approximation of the noise level since a reciprocity relationship was shown for the scattered field $u^s$ generated by a point source $u^i=\mathbb G(\cdot \, ,y)$ in \cite{DongLi2024Uniqueness}, implying $\mathbf N$ is a symmetric matrix. Figure~\ref{fig:disk_alpha_comparison} compares the two Tikhonov regularization parameter choices for the disk-shaped obstacle under $20\%$ noise. The noise-dependent choice produces a noticeably more robust reconstruction to the ad-hoc choice of $\alpha$. Similarly,
in Figure \ref{fig:peanut_alpha_comparison}, we provide provide the reconstruction of the peanut-shaped scatterer with $\delta=0.05$ for $\alpha=10^{-6}$ chosen ad-hoc and optimally via \eqref{optimal_alpha} for the Tikhonov filter function. Here we see that the reconstruction using the optimal regularization parameter gives an image that's more robust towards noise. In addition, Figure \ref{fig:star5_alpha_comparison} provides a reconstruction of the star-shaped scatterer with $\delta=0.10$, where the same comparison between the fixed ad-hoc parameter choice and the noise-dependent parameter selection is performed. Similarly, we observe that the parameter choice given by \eqref{optimal_alpha} improves the stability of the reconstruction in the presence of noisy data.

\begin{figure}[ht]
\centering

\begin{subfigure}[t]{0.48\textwidth}
    \centering
    \includegraphics[width=\linewidth]{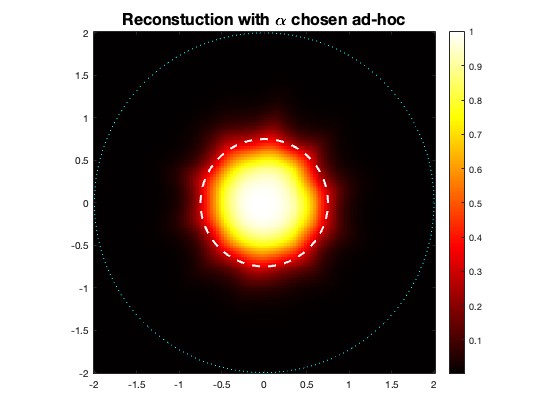}
    \caption{Tikhonov regularization with fixed $\alpha=10^{-6}$}
\end{subfigure}\hfill
\begin{subfigure}[t]{0.48\textwidth}
    \centering
    \includegraphics[width=\linewidth]{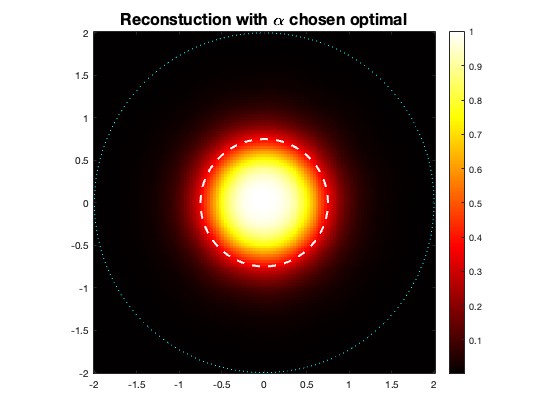}
    \caption{Tikhonov regularization with noise-dependent $\alpha$}
\end{subfigure}

\caption{Comparison of Tikhonov regularization parameter choices for the
reconstruction of the disk-shaped obstacle with radius $=0.75$. The left
figure uses the fixed ad-hoc parameter $\alpha=10^{-6}$, while the right
figure uses the parameter choice
$\alpha=0.25\cdot\text{Error}^{1/8}$. Here, $\kappa=2$, $40$ sources
and receivers are used, and $\delta=20\%$ random relative noise is added
to the near-field data.}
\label{fig:disk_alpha_comparison}
\end{figure}

\begin{figure}[ht]
\centering

\begin{subfigure}[t]{0.48\textwidth}
    \centering
    \includegraphics[width=\linewidth]{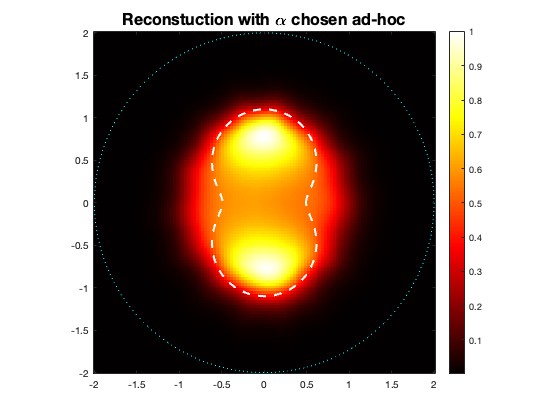}
    \caption{Tikhonov regularization with fixed $\alpha=10^{-6}$}
\end{subfigure}\hfill
\begin{subfigure}[t]{0.48\textwidth}
    \centering
    \includegraphics[width=\linewidth]{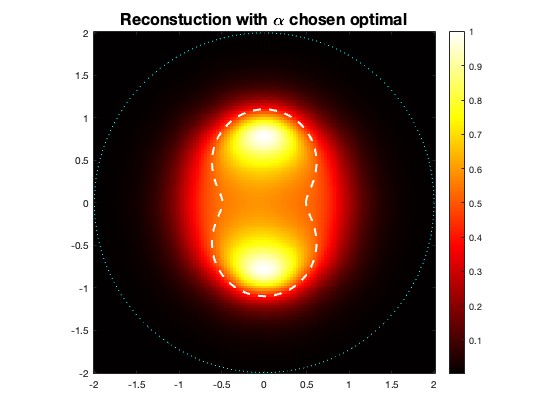}
    \caption{Tikhonov regularization with noise-dependent $\alpha$}
\end{subfigure}

\caption{Comparison of Tikhonov regularization parameter choices for the reconstruction of the peanut-shaped obstacle. The left figure uses the fixed ad-hoc parameter $\alpha=10^{-6}$, while the right figure uses the parameter choice $\alpha=0.25\cdot \text{Error}^{1/8}$. Here
 $\kappa=\pi$, $40$ sources and receivers are used, and $\delta=5\%$ random relative noise is added to the near-field data.}
\label{fig:peanut_alpha_comparison}
\end{figure}

\begin{figure}[ht]
\centering

\begin{subfigure}[t]{0.48\textwidth}
    \centering
    \includegraphics[width=\linewidth]{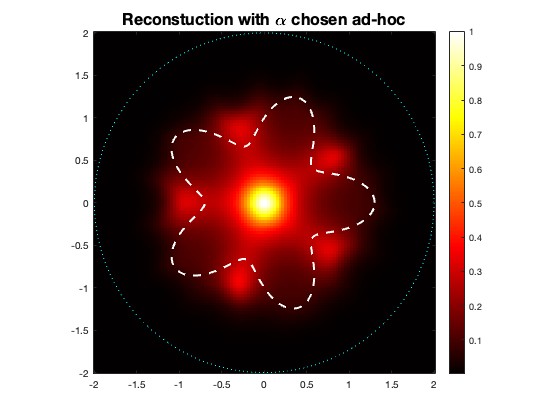}
    \caption{Tikhonov regularization with fixed $\alpha=10^{-6}$}
\end{subfigure}\hfill
\begin{subfigure}[t]{0.48\textwidth}
    \centering
    \includegraphics[width=\linewidth]{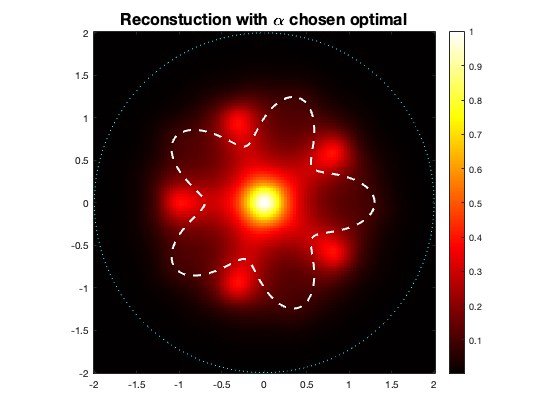}
    \caption{Tikhonov regularization with noise-dependent $\alpha$}
\end{subfigure}

\caption{Comparison of Tikhonov regularization parameter choices for the reconstruction of the five-pointed star-shaped obstacle. The left figure uses the fixed ad-hoc parameter $\alpha=10^{-6}$, while the right figure uses the noise-dependent parameter choice $
\alpha=0.25\cdot \mathrm{Error}^{1/8}$. Here $\kappa=\pi$, $40$ sources and receivers are used, and $\delta=10\%$ random relative noise is added to the near-field data.}
\label{fig:star5_alpha_comparison}
\end{figure}

\subsection{Example 4: Reconstructions using only scattered field measurements}
In this section, we consider the case when one only has scattered field data generated by a point source not measurements from a dipole. Note that the theoretical results require the multi-static data given by \eqref{NF_dataset} to ensure the unique reconstruction.  However, in practical applications, obtaining normal derivative measurements
and generating dipole incident fields is generally infeasible. By comparison, direct measurements of the scattered field are considerably more natural and accessible than recovering the full Cauchy data on the measurement curve. Here we wish to measure how well our reconstruction method works in the case where only the scattered field data is available.

Assuming that one only has the measured field $u^s$, we can express the field $u^s$ in terms of two auxiliary functions $u_{\mathrm H}^s$ and $u_{\mathrm M}^s$ that are defined by the expressions
\begin{equation}\label{aux_components}
    u_{\mathrm H}^s=-\frac{1}{2\kappa^2}(\Delta u^s-\kappa^2u^s)\quad \text{and}\quad u_{\mathrm M}^s=\frac{1}{2\kappa^2}(\Delta u^s+\kappa^2u^s).
\end{equation}
This would imply
\[
u^s=u_{\mathrm H}^s+u_{\mathrm M}^s,
\]
just as in \cite{HarrisLiOzochiawaeze2026} and other works. Here, as in other works, we call $u_{\mathrm H}^s$ the propagative part and $u_{\mathrm M}^s$ the evanescent part. Using the splitting of the biharmonic Helmholtz operator
\[
\Delta^2-\kappa^4=(\Delta-\kappa^2)(\Delta+\kappa^2)=(\Delta+\kappa^2)(\Delta-\kappa^2),
\]
one obtains that $u_{\mathrm H}^s$ and $u_{\mathrm M}^s$ satisfy the Helmholtz and modified Helmholtz equations, respectively, outside of the scatterer $D$, i.e., the scattering problem \eqref{eqnbcs}-\eqref{SRCs} decouples to 
\begin{align*}
    \Delta u_{\mathrm H}^s+\kappa^2 u_{\mathrm H}^s=0\quad \text{and}\quad \Delta u_{\mathrm M}^s-\kappa^2 u_{\mathrm M}^s=0\quad \text{in }\mathbb R^2\setminus\overline{D}\\
    u_{\mathrm H}^s+u_{\mathrm M}^s=-u^i\quad \partial_\nu(u_{\mathrm H}^s+u_{\mathrm M}^s)=-\partial_\nu u^i\quad \text{on }\partial D
\end{align*}
along with the radiation conditions
\begin{align}\label{decouple_3}
    \lim_{r\to\infty}\sqrt r(\partial_r u_{\mathrm H}^s-\mathrm i\kappa u_{\mathrm H}^s)=0\quad \text{and}\quad \lim_{r\to\infty}\sqrt r(\partial_r u_{\mathrm M}^s-\mathrm i\kappa u_{\mathrm M}^s)=0\quad \text{for }r=|x|.
\end{align}
Moreover, it is well-known that $u_{\mathrm M}^s$ and $\partial_r u_{\mathrm M}^s$ decay exponentially (see, e.g.,\cite{BourgeoisHazard2020}, \cite{HarrisLeeLi2025}). 

Assuming that only scattered field measurements generated by point source excitations are available, we define the reduced near-field matrix
\[
\mathbf N_{\mathrm{scat}}
=
\left(
u_N^s(X_k,Y_\ell)
\right)_{k,\ell=1}^{40}
\in\mathbb C^{40\times40}.
\]
In terms of the full near-field matrix \eqref{NFmatrix}, this corresponds to
retaining only the scattered field block generated by point sources. Hence,
the full near-field data matrix may be viewed schematically as
\[
\mathbf N=
\begin{pmatrix}
\mathbf N_{\mathrm{scat}} & *\\
* & *
\end{pmatrix},
\]
where the remaining blocks contain measurements involving dipole excitations
and/or normal derivative data. Due to the decomposition $
u^s=u_{\mathrm H}^s+u_{\mathrm M}^s,
$
and the exponential decay of the modified Helmholtz component,
\[
u_{\mathrm M}^s(x)
=
\mathcal O\left(\mathrm e^{-\kappa r}/\sqrt{r}\right),
\qquad r\to\infty,
\]
the scattered field measured on $\mathcal C=\partial B_R$ can be used to approximate the
Helmholtz component. Therefore, for sufficiently large $R$, we expect the reduced matrix
$\mathbf N_{\mathrm{scat}}$ to contain enough information for reconstructing the clamped obstacles. 

We now define the far-field transformation used in this setting. We first consider the auxiliary exterior Helmholtz problem
\[
\Delta w+\kappa^2 w=0
\qquad\text{in }\mathbb R^2\setminus \overline{\text{Int}(\mathcal C}),
\]
where $w$ satisfies the Sommerfeld radiation condition. Let 
$
f=w|_{\mathcal C}$
denote the Dirichlet trace of the radiating solution on the measurement curve. The associated Helmholtz Dirichlet-to-far-field transform is defined by
\[
\mathcal Q_{\mathrm{acou}}:
H^{1/2}(\mathcal C)\rightarrow L^2(\mathbb S^1)\quad \text{is given by}\quad
\mathcal Q_{\mathrm{acou}}f=w^\infty,
\]
where $v^\infty$ denotes the far-field pattern of $v$. Using $\mathcal C=\partial B_R$ and separation of variables, the transformation
$\mathcal Q_{\mathrm{acou}}$ can be written as
\[
(\mathcal Q_{\mathrm{acou}}f)(\theta)
=
\int_0^{2\pi}
Q_{\mathrm{acou}}^{\mathrm{func}}(\theta,\varphi)f(\varphi)\,d\varphi,\quad \text{where}\quad Q_{\mathrm{acou}}^{\mathrm{func}}(\theta,\varphi)
=
\frac{2}{\pi \mathrm i}
\sum_{|m|=0}^{\infty}
\frac{\mathrm e^{\mathrm im(\theta-\varphi-\pi/2)}}
{H_m^{(1)}(\kappa R)},
\]
as shown in \cite{HarrisNyugens2022}. With this, defining 
\[
\mathbf Q_{\mathrm{acou}}=[Q_{\mathrm{acou}}^{\text{func}}(\theta_k,\theta_{\ell})]_{k,\ell=1}^{40}\quad \text{where}\quad Q_{\mathrm{acou}}^{\text{func}}(\theta,\varphi)=\frac{2}{\pi \mathrm i}
\sum_{|m|=0}^{10}
\frac{\mathrm e^{\mathrm im(\theta-\varphi-\pi/2)}}
{H_m^{(1)}(\kappa R)},
\]
we expect that our reconstruction method will work if we use the eigenvectors and eigenvalues of the matrix $\mathbf Q_{\mathrm{acou}}\mathbf N_{\mathrm{scat}}\mathbf Q_{\mathrm{acou}}^\top \mathbf R\in \mathbb C^{40\times 40}$ in our imaging functional $W^{\mathrm{Reg}}(z)$. A similar idea was used in \cite{BourgeoisRecoquillay2020} when the linear sampling method was applied to recovering the clamped obstacle with near-field data. 

We start with Figure~\ref{fig:disk_scat_comparison}, which compares reconstructions of the
disk-shaped obstacle using the full near-field matrix
$\mathbf N^\delta$ and the scattered-field matrix
$\mathbf N_{\mathrm{scat}}^\delta$. Both use the noise-dependent parameter
$\alpha=0.25\cdot\mathrm{Error}^{1/8}$. Here, $\kappa=2$, with the disk of radius $=0.75$, and
$\delta=2\%$ random relative noise is added to the data. While the full-data reconstruction provides a more precise recovery of the boundary, the scattered-field reconstruction remains effective, with only an additional artifact in the form of a circular halo surrounding the obstacle.
\begin{figure}[ht]
\centering

\begin{subfigure}[t]{0.48\textwidth}
    \centering
    \includegraphics[width=\linewidth]{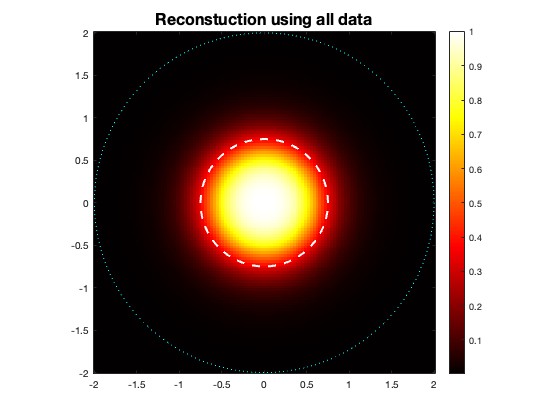}
    \caption{Full near-field matrix $\mathbf N^\delta$}
\end{subfigure}\hfill
\begin{subfigure}[t]{0.48\textwidth}
    \centering
    \includegraphics[width=\linewidth]{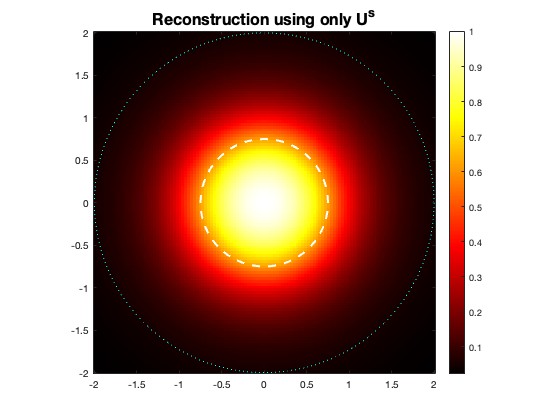}
    \caption{Scattered-field matrix $\mathbf N_{\mathrm{scat}}^\delta$}
\end{subfigure}

\caption{Comparison of reconstructions of the disk-shaped obstacle using
the full near-field matrix and the scattered-field matrix.}
\label{fig:disk_scat_comparison}
\end{figure}
Figure \ref{fig:peanut_scat_comparison} compares the reconstructions of our peanut-shaped obstacle  obtained
using the full near-field matrix $\mathbf N^\delta$ with the reconstruction
obtained using only the scattered-field matrix
$\mathbf N_{\mathrm{scat}}^\delta$. Both reconstructions use the
noise-dependent regularization parameter
$\alpha=0.25\cdot\mathrm{Error}^{1/8}$. Here, we consider the peanut-shaped
obstacle with $\kappa=\pi$, measurement curve
$\mathcal C=\partial B_R$ with $R=2$, and $\delta=10\%$ random relative noise added to
the data. The reconstruction using the scattered-field-only data successfully captures
the main features of the obstacle, demonstrating that the far-field
transformation based on $\mathcal Q_{\mathrm{acou}}$ can extract useful
information from reduced near-field measurements. As expected, the use of the
full near-field matrix provides better resolution and results in a sharper
reconstruction of the boundary. Nevertheless, the reconstruction obtained from
$\mathbf N_{\mathrm{scat}}^\delta$ remains stable.  

Figure
\ref{fig:star5_scat_comparison} demonstrate the effect of replacing the full
near-field data by scattered-field-only measurements for the five-pointed star obstacle. Again we use the
noise-dependent regularization parameter
$\alpha=0.25\cdot\mathrm{Error}^{1/8}$. Moreover we select the parameters $\kappa=\pi$, a measurement curve
$\mathcal C=\partial B_R$ with $R=2$, and $\delta=2\%$ random relative noise is added to the data. For the five-pointed star-shaped
obstacle, the overall location and geometry are still identifiable; however, the reconstruction contains additional artifacts that partly obscure the boundary.  This indicates that while the scattered-field-only formulation retains the essential information required for reconstruction, the absence of derivative and dipole measurements can affect the reconstruction quality for more geometrically complex obstacles.

\begin{figure}[ht]
\centering

\begin{subfigure}[t]{0.48\textwidth}
    \centering
    \includegraphics[width=\linewidth]{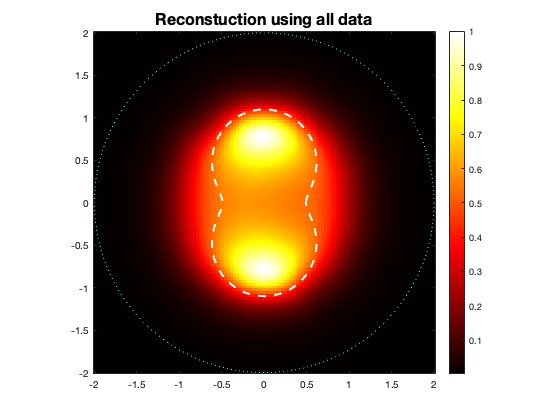}
    \caption{Reconstruction using the full near-field data $\mathbf N^\delta$}
\end{subfigure}\hfill
\begin{subfigure}[t]{0.48\textwidth}
    \centering
    \includegraphics[width=\linewidth]{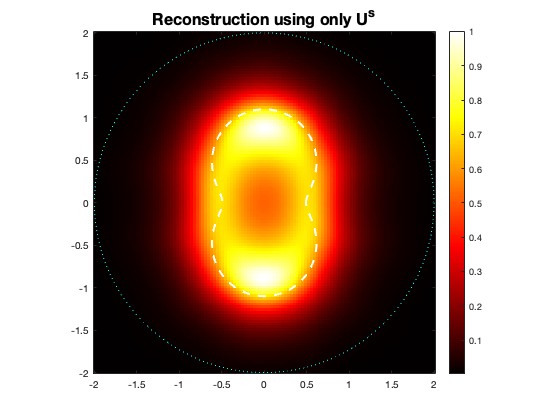}
    \caption{Reconstruction using only scattered field data 
    $\mathbf N_{\mathrm{scat}}^\delta$}
\end{subfigure}

\caption{Comparison of the factorization method using full multi-static data and
scattered-field-only data for the reconstruction of the peanut-shaped obstacle. Both reconstructions use the noise-dependent regularization parameter $\alpha=0.25\cdot\mathrm{Error}^{1/8}$.
Here $\kappa=\pi$, $40$ sources and receivers are used, and $\delta=10\%$
random relative noise is added to the near-field data.}
\label{fig:peanut_scat_comparison}
\end{figure}

\begin{figure}[ht]
\centering

\begin{subfigure}[t]{0.48\textwidth}
    \centering
    \includegraphics[width=\linewidth]{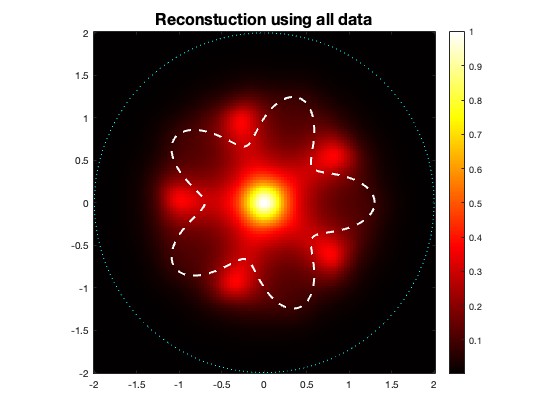}
    \caption{Reconstruction using full near-field data $\mathbf N^\delta$}
\end{subfigure}\hfill
\begin{subfigure}[t]{0.48\textwidth}
    \centering
    \includegraphics[width=\linewidth]{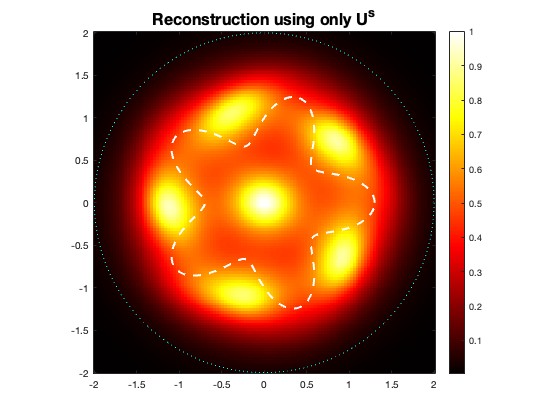}
    \caption{Reconstruction using scattered-field-only data
    $\mathbf N_{\mathrm{scat}}^\delta$}
\end{subfigure}

\caption{Comparison of reconstructions for the five-pointed star-shaped
obstacle using the full near-field matrix and the scattered-field-only
matrix. Both reconstructions use the noise-dependent regularization parameter
$\alpha=0.25\cdot\mathrm{Error}^{1/8}$. Here,
$\kappa=\pi$, and $2\%$ random relative
noise is added to the near-field data.}
\label{fig:star5_scat_comparison}
\end{figure}

\section{Conclusion} \label{conclude_sect}
In this work, we developed a rigorous factorization method framework for reconstructing a clamped obstacle in a Kirchhoff-Love thin plate from near-field measurements via a far-field transformation. Building on the work in \cite{BourgeoisRecoquillay2020}, we presented a theoretical factorization of the near-field operator and proposed a numerical algorithm to reconstruct the shape and location of the clamped obstacle. Specifically, the far-field transform $\mathcal{Q}$ played a crucial role by symmetrizing the near-field operator and cleanly connecting the scattering operator to the far-field operator, for which the analytical machinery of the factorization method justified in \cite{Zhu2026} applies. Numerical experiments are performed to demonstrate the effectiveness of our method. 

While dipole measurements were ultimately instrumental in securing the theoretical framework and addressing the complexities arising from the uniqueness result shown in \cite{BourgeoisRecoquillay2020}, our numerical experiments demonstrate that point source measurements alone are practically sufficient to achieve accurate reconstructions. This gap between the current analytical data requirements and empirical performance highlights an exciting direction for future work: analytically weakening the data assumptions to rigorously justify the sufficiency of point source data alone, as well as extending these techniques to other boundary conditions such as free plate obstacles. Note that the factorization method developed here requires two additional assumptions on the wavenumber $\kappa$: namely, that $\kappa$ is neither a clamped transmission eigenvalue nor an interior clamped eigenvalue. The latter means that $\kappa^4$ is not a Dirichlet eigenvalue of the bilaplacian operator $\Delta^2$ in the clamped domain $D$. In contrast, the uniqueness result established in \cite{BourgeoisRecoquillay2020} does not require any such spectral assumptions on $\kappa$. These additional restrictions therefore arise from the present factorization framework rather than from the underlying uniqueness of the inverse problem. Removing at least one of these spectral assumptions for reconstruction is another interesting direction for future work.

\appendix

\section{Large-Order Asymptotics of the Fourier Coefficients}
In this appendix, we derive the large-order asymptotics of the Fourier
coefficients $A_m$ and $B_m$ appearing in the representation of the
far-field transform $\mathcal{Q}$. These estimates are subsequently used to
establish the convergence rate of the truncated far-field transform
$\mathcal Q_M$.
\label{app:asymptotics}
\begin{Lemma}\label{lem:coeff_asymptotics}
Let $A_m$ and $B_m$ be the Fourier coefficients defined by \eqref{first_coeff} and
\eqref{second_coeff}, respectively. Then, as $m\to\infty$, we obtain
\[
|A_m|^2+|B_m|^2
\sim
\frac{\pi m}{2\kappa^2}
\left(\frac{\mathrm e\kappa R}{2m}\right)^{2m}. \quad \text{In particular,
$
|A_m|^2+|B_m|^2
=
\mathcal O\left(
m
\left(\frac{\mathrm e\kappa R}{2m}\right)^{2m}
\right).
$}
\]
\end{Lemma}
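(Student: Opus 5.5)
The plan is to substitute the large-order asymptotics of $H_m^{(1)}$ and $(H_m^{(1)})'$ quoted in the proof of Theorem \ref{convergence_THM}, at the two arguments $a:=\kappa R$ and $\mathrm i a$, directly into \eqref{first_coeff} and \eqref{second_coeff}. Each of $A_m$ and $B_m$ is then a ratio of products of Hankel functions, so we track two things separately: the geometric factor $\bigl(\mathrm e a/2m\bigr)^{\pm m}$ and the algebraic prefactor (powers of $m$, $\kappa$ and $\pi$).

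First, write both coefficients over the common denominator
\[
\Delta_m:=\mathrm i H_m^{(1)}(a)\,{H_m^{(1)}}'(\mathrm i a)-(H_m^{(1)})'(a)\,H_m^{(1)}(\mathrm i a),
\]
and factor out $H_m^{(1)}(a)H_m^{(1)}(\mathrm i a)$. This gives
\[
\Delta_m = H_m^{(1)}(a)H_m^{(1)}(\mathrm i a)\Bigl[\mathrm i\,L_m(\mathrm i a)-L_m(a)\Bigr],
\qquad L_m(t):=\frac{(H_m^{(1)})'(t)}{H_m^{(1)}(t)} .
\]
Similarly, $A_m=\frac{2}{\pi}\,L_m(\mathrm i a)\big/\bigl(H_m^{(1)}(a)[\mathrm i L_m(\mathrm i a)-L_m(a)]\bigr)$ and $B_m=\frac{2\mathrm i}{\pi\kappa}\big/\bigl(H_m^{(1)}(a)[\mathrm i L_m(\mathrm i a)-L_m(a)]\bigr)$. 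In both coefficients the factor $H_m^{(1)}(\mathrm i a)$ cancels, so only $|H_m^{(1)}(a)|^{-1}\sim\sqrt{\pi m/2}\,(\mathrm e a/2m)^{m}$ survives. This is exactly the source of the factor $\bigl(\mathrm e\kappa R/2m\bigr)^{2m}$ after squaring, together with a $\tfrac{\pi m}{2}$ contribution to the prefactor.

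Second, expand the logarithmic derivatives. The quoted derivative asymptotic only yields $L_m(t)\approx \pm m/t$. At that order $\mathrm i L_m(\mathrm i a)$ and $L_m(a)$ agree, so the bracket $\mathrm i L_m(\mathrm i a)-L_m(a)$ vanishes identically. I expect this cancellation to be the main obstacle: the leading-order information quoted from \cite{NIST:DLMF} is not enough to determine $\Delta_m$.

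To resolve it, I will compute $L_m$ to the next order. I will use the recurrence $(H_m^{(1)})'(t)=H_{m-1}^{(1)}(t)-\tfrac{m}{t}H_m^{(1)}(t)$, which is equivalent to the averaged form already recorded in the paper, together with the ratio asymptotic $H_{m-1}^{(1)}(t)/H_m^{(1)}(t)=\tfrac{t}{2m}(1+O(m^{-1}))$. This ratio follows from the same quoted large-order formula applied at orders $m$ and $m-1$. The result is
\[
L_m(t)=-\frac{m}{t}+\frac{t}{2m}+O(m^{-2}),
\]
uniformly for $t\in\{a,\mathrm i a\}$. Hence the bracket behaves like a nonzero multiple of $a/m$, and $\Delta_m\sim -\tfrac{a}{m}H_m^{(1)}(a)H_m^{(1)}(\mathrm i a)$.

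Finally, insert these expansions to obtain explicit leading terms for $|A_m|$ and $|B_m|$. Compare them to find which one dominates $|A_m|^2+|B_m|^2$, and collect the algebraic prefactor in terms of $m$, $\kappa$ and $R$. The geometric factor $(\mathrm e\kappa R/2m)^{2m}$ and the factor $\tfrac{\pi m}{2}$ come directly from $|H_m^{(1)}(\kappa R)|^{-2}$. The remaining powers of $m$ and $\kappa$ come from $L_m(\mathrm i a)$ and the bracket, and they must be tracked carefully to match the constant $\tfrac{1}{\kappa^2}$ in the stated equivalence.

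The "in particular" statement then follows once the prefactor is bounded polynomially in $m$. That is the only form in which the lemma enters Theorem \ref{convergence_THM}, since there the factor $4^{-m}$ absorbs any polynomial growth in $m$ through the ratio-test argument already given.
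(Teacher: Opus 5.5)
Your reduction is correct, and so is your key observation. Since $(H_m^{(1)})'(t)=H_{m-1}^{(1)}(t)-\frac{m}{t}H_m^{(1)}(t)$ holds exactly, the $\pm m/a$ terms cancel identically and
\[
\mathrm i L_m(\mathrm i a)-L_m(a)=\mathrm i\,\frac{H_{m-1}^{(1)}(\mathrm i a)}{H_m^{(1)}(\mathrm i a)}-\frac{H_{m-1}^{(1)}(a)}{H_m^{(1)}(a)}\sim-\frac{a}{m}.
\]
There is no further cancellation, because both terms are $\sim -a/(2m)$. This is exactly where the paper's own proof goes wrong. With its asymptotic formulas, both products in the denominator equal $\mathrm i^{m+2}\frac{m}{R}\frac{2}{\pi m}\bigl(\frac{\mathrm e\kappa R}{2m}\bigr)^{-2m}$. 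The nonzero value $\frac{2}{\pi R}\mathrm i^m(1+\mathrm i)(\cdots)^{-2m}$ appears only because the first product is written with $\mathrm i^{m+1}$.

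The gap is your final step, and it cannot be completed, because nothing is left undetermined. Inserting $L_m(\mathrm i a)\sim \mathrm i m/a$ and the bracket $\sim -a/m$ into your own expressions gives
\[
A_m\sim-\frac{2\mathrm i}{\pi}\frac{m^2}{a^2}\frac{1}{H_m^{(1)}(a)},\qquad B_m\sim-\frac{2\mathrm i m}{\pi\kappa a}\frac{1}{H_m^{(1)}(a)},
\]
and therefore
\[
|A_m|^2+|B_m|^2\sim\frac{2m^5}{\pi\kappa^4R^4}\Bigl(\frac{\mathrm e\kappa R}{2m}\Bigr)^{2m}.
\]
This exceeds the claimed right-hand side by a factor $\sim 4m^4/(\pi^2\kappa^2R^4)\to\infty$. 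No amount of careful bookkeeping will produce $\frac{\pi m}{2\kappa^2}$. Carried to completion, your argument disproves both the stated equivalence and the ``in particular'' bound $\mathcal O\bigl(m(\mathrm e\kappa R/2m)^{2m}\bigr)$, so the lemma as stated is false.

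Your last paragraph is also inaccurate on its own terms. A polynomial bound on the prefactor does not yield the ``in particular'' statement, which fixes the prefactor to be $m$. What your method actually proves is $|A_m|^2+|B_m|^2=\mathcal O\bigl(m^5(\mathrm e\kappa R/2m)^{2m}\bigr)$. That corrected statement is all Theorem~\ref{convergence_THM} uses, since $m^5(\mathrm e\kappa R/m)^{2m}$ is still bounded and the $4^{-m}$ argument goes through unchanged. You should state and prove that version instead.
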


\begin{proof}
We use the standard large-order asymptotic formula
\[
-\mathrm{i}H_m^{(1)}(t)
\sim
\sqrt{\frac{2}{\pi m}}
\left(\frac{\mathrm et}{2m}\right)^{-m},
\qquad m\to\infty,
\]
for fixed $t\neq0$ referenced in \cite{NIST:DLMF}. Moreover, we note the recurrence relation
\begin{align*}
(H_m^{(1)})'(t)=
\frac12\left(
H_{m-1}^{(1)}(t)-H_{m+1}^{(1)}(t)
\right)= \frac{1}{2}H_m^{(1)}(t)\left(\frac{H_{m-1}^{(1)}(t)}{H_m^{(1)}(t)}-\frac{H_{m+1}^{(1)}(t)}{H_m^{(1)}(t)}\right)
\end{align*}
holds. Then for a fixed $t$, we see that the large-order asymptotic formula gives that 
\begin{align*}
    \frac{H_{m+1}^{(1)}(t)}{H_m^{(1)}(t)}&\sim \sqrt{\frac{m}{m+1}}\left(\frac{\mathrm et}{2(m+1)}\right)^{-(m+1)}\left(\frac{\mathrm et}{2m}\right)^{m}\\
    &=\sqrt{\frac{m}{m+1}}\left(\frac{2(m+1)}{\mathrm et}\right)\left(\frac{2(m+1)}{\mathrm et}\cdot \frac{\mathrm et}{2m}\right)^m\\
    &=\sqrt{\frac{m}{m+1}}\left(\frac{2(m+1)}{\mathrm et}\right)\left(1+\frac{1}{m}\right)^m
    \end{align*}
    which simplifies to give that 
    \begin{align*}
    \frac{H_{m+1}^{(1)}(t)}{H_m^{(1)}(t)}&\sim   \frac{2(m+1)}{\mathrm{e}t}\cdot \mathrm{e} =\frac{2(m+1)}{t}\sim \frac{2m}{t} \quad \text{as $m\to\infty$.}
\end{align*}
Thus,
\[
\frac{H_{m+1}^{(1)}(t)}{H_m^{(1)}(t)}
\sim \frac{2m}{t}\quad \text{as $m\to\infty$.}
\]
Similarly, by replacing $m$ with $m-1$ in the large-order asymptotic formula,
\[
\frac{H_{m-1}^{(1)}(t)}{H_m^{(1)}(t)}
\sim \frac{t}{2m} \quad \text{as $m\to\infty$.}
\]
Therefore, substituting the asymptotics in the recurrence relation yields
\[
-\mathrm i(H_m^{(1)})'(t)
\sim
\frac{m}{t}
\sqrt{\frac{2}{\pi m}}
\left(\frac{\mathrm et}{2m}\right)^{-m}\quad \text{as $m\to\infty$.}
\]
Consequently,
\[
H_m^{(1)}(\kappa R)
\sim
\mathrm i\sqrt{\frac{2}{\pi m}}
\left(\frac{\mathrm{e}\kappa R}{2m}\right)^{-m}\quad \text{and}\quad (H_m^{(1)})'(\kappa R)
\sim
\mathrm i\frac{m}{\kappa R}
\sqrt{\frac{2}{\pi m}}
\left(\frac{\text{e}\kappa R}{2m}\right)^{-m} \quad \text{as $m\to\infty$.}
\]
For the imaginary argument, we similarly have
\[
H_m^{(1)}(\mathrm{i}\kappa R)
\sim
\mathrm i^{m+1}
\sqrt{\frac{2}{\pi m}}
\left(\frac{\mathrm{e}\kappa R}{2m}\right)^{-m}\quad \text{and}\quad (H_m^{(1)})'(\mathrm{i}\kappa R)
\sim
\mathrm{i}^{m+1}\frac{m}{\text{i}\kappa R}
\sqrt{\frac{2}{\pi m}}
\left(\frac{\mathrm e\kappa R}{2m}\right)^{-m} \quad \text{as $m\to\infty$.}
\]
Thus, the denominator appearing in \eqref{first_coeff}--\eqref{second_coeff}
satisfies
\begin{align*}
&\mathrm i\kappa H_m^{(1)}(\kappa R)(H_m^{(1)})'(\mathrm i\kappa R)
-\kappa(H_m^{(1)})'(\kappa R)H_m^{(1)}(\mathrm i\kappa R)
\\
&\quad\sim
\frac{m}{R}\mathrm i^{m+1}
\frac{2}{\pi m}
\left(\frac{\mathrm e\kappa R}{2m}\right)^{-2m}
-
\frac{m}{R}\mathrm i^{m+2}
\frac{2}{\pi m}
\left(\frac{\mathrm e\kappa R}{2m}\right)^{-2m}
\\
&\quad=
\frac{2}{\pi R}\mathrm i^m(1+\mathrm i)
\left(\frac{\mathrm e\kappa R}{2m}\right)^{-2m},
\end{align*}
where we have used $\mathrm i^{m+2}=-\mathrm i^m$.
For $A_m$, its numerator satisfies
$$
2(H_m^{(1)})'(\mathrm i\kappa R)
\sim
2\mathrm i^m\frac{m}{\mathrm i\kappa R}
\sqrt{\frac{2}{\pi m}}
\left(\frac{\mathrm e\kappa R}{2m}\right)^{-m}
\quad \text{as $m\to\infty$.}
$$
Thus, substitution into \eqref{first_coeff} gives
$$
A_m
\sim
\frac{1-\mathrm i}{2}
\sqrt{\frac{2\pi m}{\kappa^2}}
\left(\frac{\mathrm e\kappa R}{2m}\right)^m,
\quad \text{and hence} \quad
|A_m|^2
\sim
\frac{\pi m}{2\kappa^2}
\left(\frac{\mathrm e\kappa R}{2m}\right)^{2m}
\quad \text{as $m\to\infty$.}
$$
Similarly, for $B_m$, its numerator satisfies

$$
2\mathrm i H_m^{(1)}(\mathrm i\kappa R)
\sim
2\mathrm i^{m+2}
\sqrt{\frac{2}{\pi m}}
\left(\frac{\mathrm e\kappa R}{2m}\right)^{-m}
\quad \text{as $m\to\infty$.}
$$
Thus, substitution into \eqref{second_coeff} gives
$$
B_m
\sim
\frac{\mathrm i-1}{2}
\sqrt{\frac{2\pi R^2}{m}}
\left(\frac{\mathrm e\kappa R}{2m}\right)^m,
\quad \text{and hence} \quad
|B_m|^2
\sim
\frac{\pi R^2}{2m}
\left(\frac{\mathrm e\kappa R}{2m}\right)^{2m}
\quad \text{as $m\to\infty$.}
$$
Therefore,
\begin{align*}
|A_m|^2+|B_m|^2
&\sim
\left(
\frac{\pi m}{2\kappa^2}
+
\frac{\pi R^2}{2m}
\right)
\left(\frac{\mathrm e\kappa R}{2m}\right)^{2m}
=
\frac{\pi m}{2\kappa^2}
\left(
1+\frac{\kappa^2R^2}{m^2}
\right)
\left(\frac{\mathrm e\kappa R}{2m}\right)^{2m}
\quad \text{as $m\to\infty$.}
\end{align*}
Since $R$ and $\kappa$ are fixed, we note that
$
1+\frac{\kappa^2R^2}{m^2}\longrightarrow 1
\text{ as }m\to\infty,
$ which proves the result.
\end{proof}

\noindent{\bf Acknowledgments:} The research of authors I. Harris and G. Ozochiawaeze is partially supported by the NSF DMS Grants 2509722 and 2208256. \\


\end{document}